\documentclass[conference]{IEEEtran}
\IEEEoverridecommandlockouts

\usepackage{tikz}
\usetikzlibrary{arrows.meta,bending}
\usepackage{fancyhdr}
\usepackage{amsmath,amssymb,amsfonts}
\usepackage{graphicx}
\usepackage{verbatim}
\usepackage{float}
\usepackage{caption}
\usepackage{amsmath,amsthm,amsfonts,amssymb,amscd}

\usepackage{subfigure}
\usepackage{enumerate}
\usepackage{fancyhdr}
\usepackage{mathrsfs}
\usepackage{xcolor}
\usepackage{graphicx}
\usepackage{listings}
\usepackage{float}
\usepackage{textcomp}
\usepackage{xcolor}
\usepackage{amsthm,amsmath,amssymb}
\usepackage{mathrsfs}
\usepackage{hyperref}
\usepackage{ragged2e}
\usepackage{cite}
\newcommand{\alg}{CuckooGraph}
\definecolor{fzc}{RGB}{255,0,0}

\definecolor{cyl}{RGB}{0,0,0}

\definecolor{gjr}{RGB}{0,0,0}

\newcommand{\bbb}{\noindent\textbf}
\newcommand{\xxx}{\noindent\textit}

\definecolor{reviewA}{HTML}{000000}
\definecolor{reviewB}{HTML}{000000}
\definecolor{reviewC}{HTML}{000000}

\usepackage{pifont}
\usepackage[utf8]{inputenc}
\usepackage{graphicx}
\usepackage{subfigure}
\usepackage{lipsum}
\usepackage{amsmath,bm}
\usepackage{amssymb}
\usepackage{indentfirst}
\usepackage{amsbsy}
\usepackage{cleveref}
\usepackage{mathrsfs}
\usepackage{amsthm}
\usepackage{tabularx}
\usepackage{makecell}
\usepackage{enumitem}
\usepackage[ruled, vlined, linesnumbered]{algorithm2e}
\usepackage{algpseudocode}
\usepackage{multirow} 
\usepackage[normalem]{ulem}
\makeatletter 
\usepackage{framed} 
\newtheorem{theorem}{Theorem}
\newtheorem{lemma}[theorem]{Lemma}
\newtheorem{definition}[theorem]{Definition}

\begin{document}

\title{CuckooGraph: A Scalable and Space-Time Efficient Data Structure for Large-Scale Dynamic Graphs}

\author{
\IEEEauthorblockN{
Zhuochen Fan\IEEEauthorrefmark{1}\IEEEauthorrefmark{2}, 
Yalun Cai\IEEEauthorrefmark{2}, 
Zirui Liu\IEEEauthorrefmark{2}, 
Jiarui Guo\IEEEauthorrefmark{2}, 
Xin Fan\IEEEauthorrefmark{3}, 
Tong Yang\IEEEauthorrefmark{2}, 
Bin Cui\IEEEauthorrefmark{2}
}

\IEEEauthorblockA{
\IEEEauthorrefmark{1}\small{Pengcheng Laboratory, Shenzhen, China}
\IEEEauthorrefmark{2}\small{School of Computer Science, Peking University, Beijing, China}
\IEEEauthorrefmark{3}\small{Wuhan University, Wuhan, China}
}

\texttt{\{fanzc,caiyalun,zirui.liu,ntguojiarui,yangtong,bin.cui\}@pku.edu.cn, xin.fan@whu.edu.cn}

\thanks{Co-first authors: Zhuochen Fan, Yalun Cai, and Zirui Liu.
Corresponding author: Tong Yang (yangtong@pku.edu.cn). 
}

}

\maketitle

\begin{abstract}

Graphs play an increasingly important role in various big data applications. 
However, existing graph data structures cannot simultaneously address the performance bottlenecks caused by the dynamic updates, large scale, and high query complexity of current graphs.
This paper proposes a novel data structure for large-scale dynamic graphs called \alg{}. 
It does not require any prior knowledge of the upcoming graphs, and can adaptively resize to the most memory-efficient form while requiring few memory accesses for very fast graph data processing.
The key techniques of \alg{} include \textsc{Transformation} and \textsc{Denylist}. 
\textsc{Transformation} fully utilizes the limited memory by designing related data structures that allow flexible space transformations to smoothly expand/tighten the required space depending on the number of incoming items.
\textsc{Denylist} efficiently handles item insertion failures and further improves processing speed.
Our experimental results show that compared with the most competitive solution Spruce, \alg{} achieves about $33\times$ higher insertion throughput while requiring only about $68\%$ of the memory space.

\end{abstract}

\setlength{\abovecaptionskip}{-0.00cm}
\setlength{\belowcaptionskip}{-0.02cm}
\setlength{\subfigcapskip}{-0.15cm}

\section{Introduction}
\subsection{Background and Motivation}
\label{sec:Background}

Graphs can intuitively represent various relationships between entities and are widely used in various big data applications, such as user behavior analysis in social/e-commerce networks \cite{li2017most,matsunobu2020myrocks,zhang2021group}, financial fraud detection in transactional systems \cite{wang2019semi,jiang2022spade,huang2022dgraph}, network security and monitoring in the Internet \cite{iliofotou2007network,wang2009privacy,simeonovski2017controls}, and even trajectory tracking of close contacts of the COVID-19 epidemic \cite{ma2022hierarchical,zhu2023novel}, $etc.$
Correspondingly, graph analytics systems also play an increasingly significant role, responsible for storing, processing, and analyzing graph-like data well.

As an essential part of graph analytics systems, graph storage schemes are facing challenges introduced mainly by the following properties of graph data:
\ding{172} Fast update: graphs always arrive quickly and are constantly dynamic \cite{mondal2012managing,pandey2021terrace,hou2024aeong}.
%
This requires the storage structure to be updated at high speed.
\ding{173} Large data scale: graphs can even reach hundreds of millions of edges \cite{potamias2010k,kang2012gbase,sahu2017ubiquity}.
%
%
This requires the storage structure to be flexibly adapted to the data scale.
\ding{174} High query complexity: graphs have complex topologies, and their node degrees are often unevenly distributed and follow a power-law distribution \cite{shao2014parallel,wei2019prsim,hourglasssketch}. 
This means that graphs usually consist of mostly low-degree nodes and a few high-degree nodes.
Querying the neighbors of high-degree nodes takes longer, while low-degree nodes take less time. 
However, the former is more likely to be queried and updated than the latter. 
This imbalance leads to poor query performance and hinders further optimization.
In summary, an ideal graph storage scheme can achieve memory-saving, fast processing speed, and good update and expansion performance to deal with any unknown graphs.




Currently, most existing graph storage and database schemes \cite{ediger2012stinger,shun2013ligra,zou2014gstore,sun2015sqlgraph,macko2015llama,wheatman2018packed,dhulipala2019low,firmli2020csr++,kumar2020graphone,zhu2020livegraph,feng2021risgraph,mhedhbi2021a+,islam2022vcsr,fuchs2022sortledton,qiuwind,shi2024spruce,Neo4j,OrientDB,ArangoDB,JanusGraph,GraphDB} use the following two as basic data structures: adjacency list and compressed sparse row (CSR), but neither of them directly supports large-scale dynamic graphs.
The most widely used adjacency list represents node connections intuitively and is easy to edit (such as adding or removing edges).
However, due to its non-contiguous memory allocation and inefficiency in accessing non-neighbor edges, this pointer-intensive data structure is prone to significant space-time overhead as the graph size grows.
The CSR provides a more compact array-based representation that is more memory efficient and suitable for fast traversal.
However, the CSR is inherently static and struggles with updates for dynamic graphs, as its update usually requires completely rebuilding the CSR structure, which is computationally expensive and inefficient.
In order to accommodate large-scale dynamic graphs, the data structures of many state-of-the-art graph storage schemes are evolved from adjacency lists or CSRs.
Unfortunately, they cannot completely avoid the above-mentioned shortcomings of the adjacency list or CSR itself, so that they cannot solve the above \ding{172}\ding{173}\ding{174} at the same time or have various limitations, and there is still room for improvement.
For example, Spruce \cite{shi2024spruce}, the most competitive solution whose basic data structure is based on adjacency lists, still needs to record many pointers.

\subsection{Our Proposed Solution}

In this paper, we propose a novel data structure for storing large-scale dynamic graphs, namely \textbf{\alg{}}.
It has the following advantages:
1) It is memory-saving and can be flexibly expanded or contracted according to actual operations;
2) It basically maintains the fastest running speed in a series of graph analytics tasks;
3) It works for any graph of unknown size without knowing any information about it in advance.



The design philosophy of \alg{} is as follows.
Instead of using the traditional adjacency list or CSR, we choose a hash-array-based data structure to improve time and space efficiency when handling large-scale dynamic graphs.
Specifically, we utilize a (large) cuckoo hash table (L-CHT) \cite{pagh2004cuckoo} as the basic data structure with a finer-grained partitioning of the space in each bucket.
We assume that the edge $\left \langle u, v \right \rangle$ is 
is mapped and will be stored in this bucket.
Initially, part of the bucket space is used to store node $u$, and the other part, which is divided into an even number of small slots, is used to store node(s) $v$.
Then, L-CHT decides whether to perform our \textsc{transformation} technique based on the degree (the number of incoming $v$) of the $u$:
1) When the node degree is small: this sparsity is more consistent with most graphs in reality, then we sequentially store the incoming $v$ into the small slots.
2) When the node degree exceeds the specified number of small slots: these small slots merged in pairs to form several large slots with one of them deposited into the first pointer to the first (small) cuckoo hash table (S-CHT) that has just been activated, and all $v$ is transferred into that large-capacity S-CHT to accommodate more incoming $v$.
3) When the node degree is even larger: S-CHT is incremented with some regularity, 
to cope with the large increase of $v$;
and of course, it can also be decremented with some regularity to handle $v$ deletions.
In short, our \textsc{transformation} smoothly expands or shrinks the space through a series of spatial transformations to adapt to the increase or decrease of $v$, while ensuring that few accesses are required even in the worst case.
It greatly reduces the number of pointers and makes full use of limited memory space while ensuring speed.


Although we seem to fully guarantee time and space efficiency through the well-designed L/S-CHT, the shortcomings of cuckoo hashing itself have not yet been addressed.
As the memory space becomes tight due to the increase in incoming items, item replacement caused by hash collisions may occur frequently and may cause insertion failures.
On the one hand, not handling insertion failures may result in \alg{} no longer being error-free, which is unacceptable; on the other hand, we can also address it by directly expanding \alg{} every time an insertion failure occurs, but this may make it slower.
Therefore, we further propose the \textsc{denylist} optimization, which aims to cooperate with the \textsc{transformation} technique to efficiently accommodate those items that fail to be inserted.
Our ablation experiments have verified that this optimization can improve insertion and query throughput with almost no additional memory overhead.
For more details on \alg{}'s \textsc{transformation} technique and \textsc{denylist} optimization, please refer to $\S$~\ref{b-structure} and $\S$~\ref{BS}, respectively.
Further, we also propose an extended version of \alg{} for streaming scenarios to support duplicate edges in $\S$~\ref{ev}.

The rest of this paper is organized as follows.
We theoretically prove that the time and memory cost of \alg{} is desirable through mathematical analysis in $\S$~\ref{MSA}.
We conduct extensive experiments on 7 large-scale graph datasets with different characteristics to evaluate the performance of \alg{} on basic tasks and graph analytics tasks.
The results clearly show that \alg{} has the fastest speed and the lowest memory overhead on almost all tasks.
Finally, \alg{} is integrated into Redis and Neo4j databases as use cases.
See $\S$~\ref{sec:res} for more details.
All relevant source code is already open source on Github \cite{link}.

\bbb{Main Experimental Results:}
1) For basic tasks ($\S$~\ref{exp:TM}), the insertion and query throughput of \alg{} is on average $32.66\times$ and $133.62\times$ faster than those of Spruce, respectively, while its memory usage is  on average $1.47\times$ less than that of Spruce ($i.e.$, only 68.03\% of its);
2) For graph analytics tasks ($\S$~\ref{exp:Analytics}), the running time of \alg{} on 7 typical tasks (Breadth-First Search, Single-Source Shortest Paths, Triangle Counting, Connected Components, PageRank, Betweenness Centrality, and Local Clustering Coefficient) is on average $0.73\times$, $168.45\times$, $21.33\times$, $1.07\times$, $1.03\times$, $16.17\times$, and $5.80\times$ faster than those of Spruce, respectively.



\section{Related Work}
\label{sec:related}

In this section, we introduce graph storage schemes whose main contribution lies in data structure design and Cuckoo Hash Table (CHT)—the most basic data structure of \alg{}, in $\S$~\ref{RW:exist} and $\S$~\ref{RW:CH}, respectively.

\subsection{Existing Solutions}
\label{RW:exist}

There are many existing works with the concept of dynamic graph storage, only some of which focus on optimizing graph updates (insertion, deletion, and attribute change, $etc.$) at the algorithm level.
Most of them have data structures based on or improved on adjacency lists or CSRs.
Of course, there are also other or hybrid ones.
We only select representatives to introduce for the sake of space.



\bbb{Adjacency list-based: }GraphOne \cite{kumar2020graphone} uses a complementary combination of adjacency lists and edge log lists. The adjacency list stores snapshots of old data, while the edge log records the latest updates, which periodically transfers data into the adjacency list in batches.
LiveGraph \cite{zhu2020livegraph} uses the proposed Transactional Edge Log (TEL) and Vertex Blocks (VB) to store edge information and nodes, respectively. In TEL, edge insertions and updates are performed in the form of log entries in a specified order.
RisGraph \cite{feng2021risgraph} uses the proposed indexed adjacency lists and sparse arrays. The indexed adjacency list is a dynamic array including arrays and edge indexes for continuous storage, while the sparse array is used for updates to avoid unnecessary data access.
Sortledton \cite{fuchs2022sortledton} uses a customized adjacency list, including the expandable adjacency index and adjacency set, to store the mapping from nodes to edge sets and the neighbors of each node, respectively.
Wind-Bell Index (WBI) \cite{qiuwind} consists of an adjacency matrix and many adjacency lists, where each bucket of the matrix is associated with an adjacency list through a pointer.
It selects the shortest hanging list through multiple hashes to address the slow query caused by node degree imbalance that is not considered in existing graph databases.
Spruce \cite{shi2024spruce} consists of an edge-storage part and a node-indexing part similar to the vEB tree.
The edge-storage part is based on the adjacency list and is used to store the edges as well as attributes.
The node-indexing part includes a hash table and a bit vector. It is used to record node identifiers and map nodes to their connected edges.
Also, it divides the 8-byte identifier into 4, 2, 2, where 4 is stored in the hash table to share the same hash address, and two 2s are stored in the bit vector associated with the edge storage part.
In this way, Spruce achieves low memory consumption and efficient dynamic operations, but it still needs to record quite a few pointers to any graph.

\begin{table}[t]
\begin{center}
\renewcommand\arraystretch{1.2}
\caption{Symbols used in this paper.}
\vspace{0.03in}
\label{taDLe:symbol}
\resizebox{.99\columnwidth}{!}{
\begin{tabular}{|m{0.2\columnwidth}<{\centering}|m{0.75\columnwidth}|}
\hline
\textbf{Notation}&\textbf{Meaning}\\
\hline
$\left \langle u, v \right \rangle$& A distinct graph item\\
\hline
L-CHT& The large cuckoo hash table\\
\hline
$H_1(.)$, $H_2(.)$& Two hash functions associated with L-CHT\\
\hline
S-CHT& The small cuckoo hash table\\
\hline
$h_1(.)$, $h_2(.)$& Two hash functions associated with S-CHT(s)\\
\hline
$n$& The length of 1st S-CHT\\
\hline
$R$& The number of large slots in Part 2 of each cell\\
\hline
$LR$& The loading rate\\
\hline
$G$& The preset $LR$ threshold for expansion\\
\hline
$\Lambda$& The preset overall $LR$ threshold for contraction\\
\hline
L-DL& Denylist for L-CHT(s)\\
\hline
S-DL& Denylist for S-CHT(s)\\
\hline
$T$ & Maximum number of loops in L/S-CHT \\
\hline
$w$ & Weight, or number of times $\left \langle u, v \right \rangle$ is repeated \\
\hline
\end{tabular}}
\end{center}
\vspace{-0.2in}
\end{table}

\bbb{CSR-based: }To address the problem that CSR is not suitable for dynamic graphs, PCSR \cite{wheatman2018packed} replaces the array storing neighbors in CSR with the packed memory array (PMA)\footnote{PMA is a dynamic array used to maintain an ordered collection of items. It balances item insertion and deletion operations by interspersing empty slots within the array to optimize access and modification performance.} \cite{bender2007adaptive}, which essentially maintains an implicit complete binary search tree.
To avoid frequent rebalance of PCSR, VCSR \cite{islam2022vcsr} maintains PMA by reserving empty slots between nodes in proportion to the current node degree.
Teseo \cite{de2021teseo} mainly uses PMAs as leaf nodes of improved B+ trees for graph updates, but only supports undirected graphs. Each PMA is divided into several expandable segments, and a hash table maps nodes to locations within the segment.

\bbb{Other-based: }Terrace \cite{pandey2021terrace} decides which data structure to use for storage based on the node degree: 1) Nodes with few neighbors are stored in an sorted array; 2) Nodes with a medium number of neighbors are stored in a PMA; 3) Nodes with many neighbors are stored in a B+ tree.
However, its biggest drawback is that the number of nodes for the workload must be known in advance.

\subsection{Orthogonal Work}
\label{RW:Orth}

Recently, there are notable works such as VEND \cite{li2023vend} that aim to accelerate edge queries by filtering no-result edges, which are orthogonal to graph storage/databases.
Since most node pairs in the real world have only no-edge connections, VEND introduces a novel data structure for nodes to store redundant neighbor information based on range and hash solutions.
We leave the possibility of applying VEND to \alg{} as future work.

\subsection{Cuckoo Hashing}
\label{RW:CH}

Cuckoo Hashing (or Cuckoo Hash Table, CHT) \cite{pagh2004cuckoo} contains two hash tables, each associated with a hash function.
Each newly inserted item is mapped to two candidate buckets (one in each table), and one of the two buckets will be selected to store it.
If at least one of the two candidate buckets is empty, CHT stores the item in an empty bucket.
If both candidate buckets are full, CHT randomly selects one of the stored items to kick out, and the kicked out item will be re-inserted into its another candidate bucket.
This process is repeated until each item finds a bucket to settle down, or reaches the maximum kick-out threshold and has to exit, which means that there is an item insertion failure.
To query an item, CHT only needs to directly check the two candidate buckets through two hash functions.
Therefore, CHT has a high loading rate and $O(1)$ query time complexity.
However, in the worst case, item insertions take a lot of time while still failing as described above, especially when CHT has a high loading rate.

\section{\alg{} Design}
\label{sec:alg}

In this section, we present the basic version of CuckooGraph that does not support duplicate edges in $\S$~\ref{bv} and the extended version of CuckooGraph that supports duplicate edges in $\S$~\ref{ev}, respectively.
The symbols (including abbreviations) frequently used in this paper are shown in Table \ref{taDLe:symbol}.

\begin{figure*}
    \centering    \includegraphics[width=0.99\linewidth]{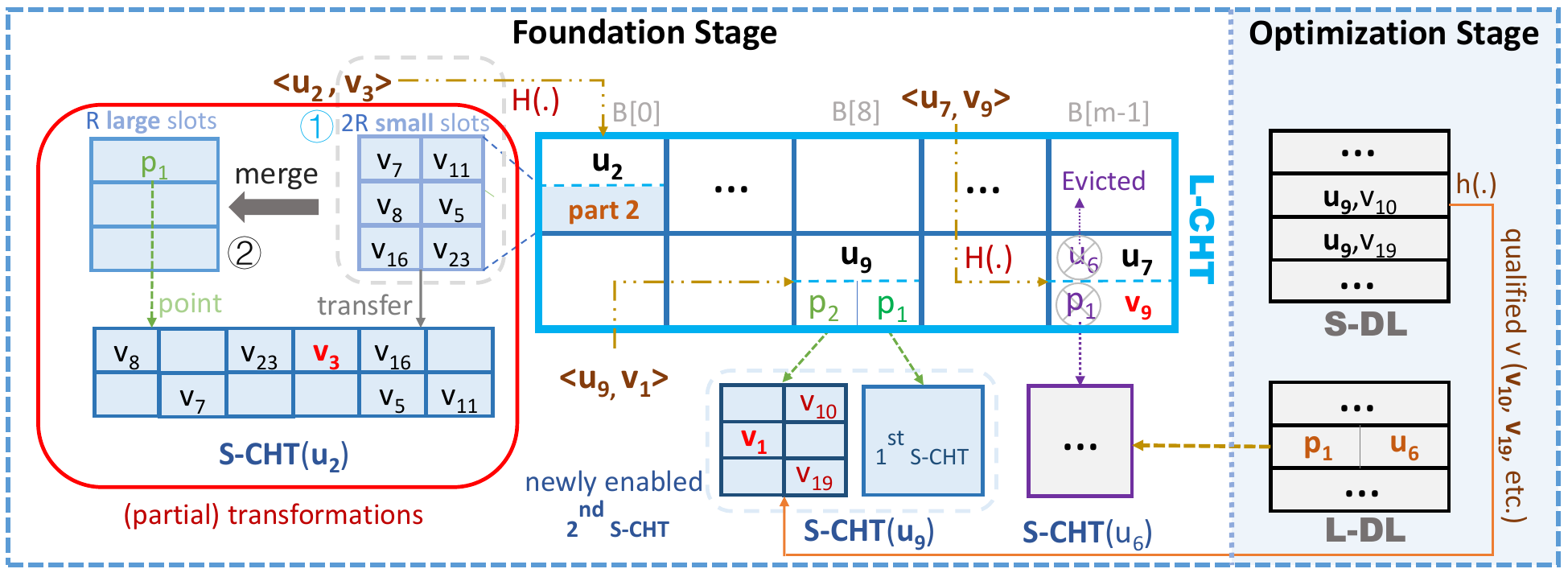}
    \vspace{0.05in}
    \caption{Data structure and examples of CuckooGraph. For clarity, only one hash table per L/S-CHT is shown.}    
    \label{structure}
\end{figure*}

\subsection{Basic Version}
\label{bv}

\subsubsection{\textbf{Foundation Stage (Transformable Data Structures)}}~
\label{b-structure}

As shown in Figure \ref{structure} (Foundation Stage), the basic data structure of \alg{} consists of one (or more) large cuckoo hash table(s) (denoted as L-CHT(s)) and many small cuckoo hash tables (denoted as S-CHTs) associated with it/them, all of which are specially designed.

L-CHT has two bucket arrays denoted as $B_1$ and $B_2$, respectively, associated with two independent hash functions $H_1(.)$ and $H_2(.)$, respectively\footnote{The basic structure of S-CHT is no different from that of L-CHT, except for the length and the association with two other independent hash functions $h_1(.)$ and $h_2(.)$.}.
Each bucket has $d$ cells, each of which is designed to have two parts: Part 1 and Part 2.
For any arriving graph item $\left \langle u, v \right \rangle$, assuming it is mapped to bucket $B_H$ and deposited in the cell $C$, then:
Part 1 of $C$ is used to store $u$, while Part 2 of $C$ is directly used to store $v$ \textbf{or} the pointer(s) pointing to S-CHT which is used to actually store $v$.
In Foundation Stage (red box), Part 2 is designed as a structure that can be flexibly transformed in a manner determined by \textit{the number of $v$} (denoted as $l$) corresponding to $u$ in Part 1, as shown below: 
\ding{172} Part 2 is initialized to $2R$ small slots, $i.e.$, up to $2R$ $v$ can be recorded, to handle the situation where $l\leq2R$;
\ding{173} When $l>2R$, $2R$ small slots are merged in pairs to form $R$ large slots dedicated to storing $R$ pointers usually with more bytes, and 1st large slot is deposited with a pointer that points to 1st S-CHT; 
Then, all the current $v$ are stored into this S-CHT of length\footnote{We define the length of CHT as the number of buckets in the array with more buckets.} $n$.


Next, we proceed to describe the transformation strategies of S-CHT and L-CHT to efficiently cope with the increasing $l$ as follows:
1) If the growing $l$ causes the loading rate ($LR$) of the 1st S-CHT to reach the preset threshold $G$ before the current $v$ arrives, we enable the 2nd pointer and store it in the 2nd large slot, as well as simultaneously enables the 2nd S-CHT;
2) By analogy, when $LR$ of the $(R-1)$-th S-CHT also reaches $G$, we continue to enable the $R$-th pointer and S-CHT similarly.
Here, we allocate the length of each newly enabled S-CHT in a memory-efficient manner, which is specifically related to the $R$ value.
We illustrate the transformation rule of length with $R = 3$, when there are at most 3 S-CHTs, as shown in Table \ref{Transformation-Rule}.
When $0\rightarrow1$ and $1\rightarrow2$ occur, the 2nd and 3rd S-CHTs with a length of $0.5n$ are enabled in turn; 
When $2\rightarrow3$ occurs, the 1st, 2nd and 3rd S-CHTs are merged at once into a new 1st S-CHT of length $2n$ on the 1st pointer, and the new 2nd S-CHT with length $n$ is enabled on the 2nd pointer; and so on.
In summary, different $R$ values correspond to different transformation rules, and such rules can also be applied to L-CHT to better handle unpredictable large-scale graphs.

\bbb{Reverse Transformation: }We introduce reverse transformation strategies for S-CHT and L-CHT to efficiently cope with the decreasing $l$.
Similar to the situation where $l$ increases, if the deletion of the current $v$ happens to cause the overall loading rate of the S-CHT chain\footnote{For convenience, we call all S-CHT(s) associated with the pointers corresponding to each $u$ a S-CHT chain.} to be lower than another threshold $\Lambda$, then we delete/compress the S-CHT where the $v$ was originally located as follows:
1) If there are two or more S-CHTs on the S-CHT chain, we delete the current S-CHT and transfer the previously stored $v$ on it to other S-CHTs;
2) If there is only one S-CHT left on the S-CHT chain, we compress the length of the S-CHT to half of the original length.
\textit{The above processing can be applied similarly to L-CHT.}

\begin{table}[t]
\renewcommand\arraystretch{1.1}
\centering
\caption{An Example of Transformation Rule}
\vspace{0.02in}
\label{Transformation-Rule}
\begin{tabular}{cccc}
\hline \# $LR>G$ & the 1st S-CHT & the 2nd S-CHT & the 3rd S-CHT \\
\hline 
0&\texttt{n} & \texttt{null} & \texttt{null} \\
1&\texttt{n} & \texttt{n/2} & \texttt{null} \\
2&\texttt{n} & \texttt{n/2} & \texttt{n/2} \\
3&\texttt{2n} & \texttt{n} & \texttt{null} \\
4&\texttt{2n} & \texttt{n} & \texttt{n} \\
5&\texttt{4n} & \texttt{2n} & \texttt{null} \\
6&\texttt{4n} & \texttt{2n} & \texttt{2n} \\
7&\texttt{8n} & \texttt{4n} & \texttt{null} \\
\texttt{\ldots} & \texttt{\ldots} & \texttt{\ldots} & \texttt{\ldots} \\
\hline
\end{tabular}
\vspace{-0.1in}
\end{table}

\subsubsection{\textbf{Optimization Stage (Denylist)}}~
\label{BS}

There is one aspect of our design that has not been considered so far: the original CHT may suffer from item insertion failures.
A straightforward solution is to extend CuckooGraph via the transformations described above whenever an insertion failure occurs.
To address this issue more efficiently, we further propose an optimization called \textsc{Denylist} (DL), as shown in Figure \ref{structure} (Optimization Stage).

DL is actually a vector with a size limit.
In CuckooGraph, all S-CHT(s) and L-CHT(s) are each equipped with a DL, denoted as S-DL and L-DL, respectively.
However, S-DL and L-DL are organized differently:
1) Each unit of S-DL records a complete graph item, $i.e.$, a $\left \langle u, v \right \rangle$ pair;
2) While L-DL's is consistent with that of each cell of L-CHT(s), so that even if $u$ is kicked out during item replacement, the associated S-CHT(s) does not need to be copied/moved.
S-DL and L-DL are used to accommodate those that are ultimately unsuccessfully inserted into S-CHT(s) and L-CHT(s), respectively.
Let’s take S-DL, which cooperates with S-CHT(s), as an example for a more detailed explanation, as follows: 
1) Initially, we assume that $v$ is attempted to be inserted into an arbitrary S-CHT;
2) When the total number of kicked out exceeds the threshold $T$ and there is still an unsettled $v'$, the insertion fails, then $v'$ and its corresponding $u'$ is placed in S-DL;
3) Each time it is the S-CHT's turn to expand, we insert those $v''$ in S-DL whose $u''$ exactly match the $u''$ present in the current S-CHT into the new S-CHT.
4) Subsequently, S-DL continues to accommodate all failed insertion items as usual.

\subsubsection{\textbf{Operations}}~
\label{b-operations}

By introducing the operations of \alg{} below, we aim to show how it can handle dynamically updated large-scale graphs.

\bbb{Insertion: }The process of inserting a new graph item $e=\left \langle u, v \right \rangle$ mainly consists of three steps, as follows:

\begin{itemize}[leftmargin=1em]

\item Step 1: We first query whether $e$ is already stored in \alg{} through the \textbf{Query} operation below. If so, $e$ is no longer inserted; otherwise, proceed to Step 2.

\item Step 2: By calculating hash functions, we map $u$ to a bucket $B_H$ of L-CHT and try to store it in one of the cells. There are three cases here: 
\ding{172} $u$ is mapped to $B_H$ for the first time and there is at least one empty cell in the bucket. 
Then, we store $u$ in Part 1 of an arbitrary empty cell, and store $v$ in Part 2. See $\S$~\ref{b-structure} for extensions of related data structures that may be triggered by the arrival of $v$.
\ding{173} $u$ is mapped to $B_H$ for the first time but the bucket is full. 
Then, we randomly kick out the resident $u'$ in one of the cells and store $u$ in it. The remaining operations are the same as \ding{172}. For this kicked-out $u'$, we just re-insert it.
\ding{174} If $u$ has been recorded in $B_H$, we directly store $v$ in Part 2.

\item Step 3: For any $u$ and $v$ that were not successfully inserted into Part 1 and Part 2 (in case of S-CHT), respectively, we store the relevant information in L-DL and S-DL, respectively.

\end{itemize}

%
Next, we illustrate the above insertion operations through the examples in Figure \ref{structure}.
For convenience, we assume that $R=3$ and that there is only one bucket array for each L-CHT and S-CHT, associated with hash functions $H(.)$ and $h(.)$, respectively.

\xxx{Example 1: }For the new item $\left \langle u_2, v_3 \right \rangle$, it is mapped to $B[0]$.
There is one cell in $B[0]$ that has already recorded $u_2$, but the small slots in Part 2 has recorded $2R$ $v$ in total.
Then, we transfer all $v$ to the 1st S-CHT by computing the hash function $h(.)$.

\xxx{Example 2: }For the new item $\left \langle u_9, v_1 \right \rangle$, it is mapped to $B[8]$.
$u_9$ has been recorded in $B[8]$, but $LR$ of the 1st S-CHT exceeds $G$. Then, we map the qualified $v$ ($v_{10}$ and $v_{19}$) in S-DL and $v_1$ to the 2nd S-CHT with half the previous length.

\xxx{Example 3: }For the new item $\left \langle u_7, v_9 \right \rangle$, it is mapped to $B[m-1]$.
Since $B[m-1]$ is already full, we randomly kick out an unlucky $u_6$ and stores $u_7$ in Part 1 of the new empty cell, and stores $v_9$ in the 1st small slot of Part 2.
Then, we try to map $u_6$ into another bucket. 
Suppose one $u$ is not settled in the end and it happens to be $u_6$. 
We have to place it in L-DL, along with the pointer associated with it.

\bbb{Query: }The process of querying a new graph item $e=\left \langle u, v \right \rangle$ mainly consists of two steps, as follows:

\begin{itemize}[leftmargin=1em]

\item Step 1: \textit{Query whether $u$ is in L-CHT, otherwise check whether it is in L-DL.} 
Specifically, we first calculate hash functions to locate a bucket $B_H$ that may record $u$, and then traverse $B_H$ to determine whether $u$ exists. If so, we directly execute Step 2; otherwise, we further query L-DL: if $u$ is in L-BS, proceed to Step 2; otherwise, return \texttt{null}.

\item Step 2: \textit{Query whether $u$ has a corresponding $v$ in Part 2.} 
If so, we directly report it; otherwise, we further query S-DL: if $v$ is in S-DL, report it; otherwise, return \texttt{null}.

\end{itemize}

\bbb{Deletion: }To delete a graph item, we first query it and then delete it.
For compression of related data structures that may be caused by deleting this item, see \textbf{Reverse Transformation} in $\S$~\ref{b-structure}.

\subsection{Extended Version}
\label{ev}

The base version of \alg{} can be easily extended into a new version that efficiently supports storing duplicate edges, as designed for streaming scenarios.

\bbb{Data Structure: }We only need to make a few customized modifications based on the transformable data structure proposed in Section \ref{b-structure}.
Specifically for S-CHT, each small slot in Part 2 needs to change from storing only $v$ to storing both $v$ and weight $w$. As more information is recorded, the number of small slots changes from $2R$ to $R$ accordingly, $i.e.$, the space of two small slots is used to store $\left \langle v, w \right \rangle$.

Next, we describe the operations related to the weighted version of \alg{}, focusing only on its differences from the basic version for better intuition.

\bbb{Insertion: } The main difference from the basic version is that when it is initially discovered that the item $\left \langle u, v \right \rangle$ already exists, it changes from doing nothing to incrementing the corresponding $w$ by 1 (or other defined value, the same below) and then returning.

\bbb{Query: }Report the item and return the value of $w$.

\bbb{Deletion: }We decrement the $w$ of the item by 1 and delete the item when the weight is reduced to 0.
\section{Mathematical Analysis}
\label{MSA}

In this section, we theoretically analyze the performance of \alg{} (basic version). 
Specifically, we show its time and memory complexity.

\subsection{Time Cost of \alg{}}

In this part, we assume that there is only insertion operation in \alg{}. 
First, we show a theorem with respect to multi-cell cuckoo hash tables.
Then, we analyze the insertion time complexity of \alg{} based on it. 
Finally, we analyze the time cost of the expansion process.

\begin{table*}[t]
  \centering
  \renewcommand\arraystretch{1.3}
  \caption{Comparison of complexity between different solutions.}
  \resizebox{1.5\columnwidth}{!}{
  \begin{tabular}{c|c|c|c}
    \hline \multirow{2}{*}{ \textbf{Algorithm} } & \multicolumn{2}{c|}{ \textbf{Amortized Time Complexity} } & \multirow{2}{*}{ \textbf{Space Complexity} } \\
\cline { 2 - 3 } & \textbf{Insert Edge $\langle u, v\rangle$} & \textbf{Query Edge $\langle u, v\rangle$} \\
    \hline
    LiveGraph \cite{zhu2020livegraph} & $O(1)$  &  $O(\deg(v))$  & $O(|E|)$ \\
    Spruce \cite{shi2024spruce} & $O\left(\frac{|E|}{|V|}\right) $ & $O\left(\log \frac{|E|}{|V|}\right) $ &  $O(|E|)$   \\
    Sortledton \cite{fuchs2022sortledton} & $O(\log |E|)$ &  $O(\log|E|)$  & $O(|E|)$ \\
    WBI \cite{qiuwind} & $O(1)$ & $O(\frac{|E|}{K^2})$ & $O(K^2+|E|)$  \\
    \textbf{\alg{} (Ours)} & $O(1)$ &  $O(1)$  & $O(|E|)$\\
    \hline
  \end{tabular}}
  \label{tab:compare}
  \label{-0.1in}
\end{table*}

\begin{theorem}
\label{theorem1}
    Assume that a cuckoo hash table has $m$ buckets, each bucket has $d$ cells, and there are $n$ distinct items to insert. Let $dm=(1+\varepsilon)n$. If $d\geq \max \left\{ 8, 15.8\ln \frac{1}{\varepsilon} \right\}$, then the expected time complexity for inserting an item is $(\frac{1}{\varepsilon})^{O(\log d)}$.     
\end{theorem}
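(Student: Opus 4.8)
The plan is to analyze the cuckoo insertion process via the standard "cuckoo graph" argument, where we track the sequence of displaced items forming a path (or more precisely a BFS tree) of evictions. I would model each bucket as a vertex and each item as an edge connecting its two candidate buckets; an insertion triggers a cascade of evictions that, in the multi-cell setting, is controlled by how quickly one encounters a bucket with a free cell. The key quantity is the probability that a random walk / search through this structure of length $k$ fails to find a free cell, i.e. that the first $k$ buckets it visits are all full. Since $dm = (1+\varepsilon)n$, the fraction of occupied cells is $\frac{1}{1+\varepsilon}$, so heuristically a bucket is "full" (all $d$ cells occupied) with probability roughly $\left(\frac{1}{1+\varepsilon}\right)^{d}$ in a Poisson-type approximation; making $d \geq 15.8 \ln\frac{1}{\varepsilon}$ forces this to be polynomially small in $\varepsilon$, which is what drives the bound.

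The steps I would carry out, in order: (1) Set up the eviction process formally as a breadth-first exploration of the cuckoo graph starting from the inserted item, bounding the expected insertion time by the expected size of the explored component before a free cell is hit. (2) Prove a concentration / counting bound showing that with the stated density, any set of $t$ buckets spans at most $t + O(1)$ items except with probability decaying in $t$ — this is the analogue of the classical "every subgraph of the cuckoo graph is nearly a forest" lemma, and it is where the $d \geq 8$ hypothesis enters to keep the excess-edge count small. (3) Combine these to show the exploration tree has expected size $\left(\frac{1}{\varepsilon}\right)^{O(\log d)}$: each level of the BFS multiplies the frontier by a branching factor tied to $d$ and average load, and the recursion terminates in $O(\log d)$-ish levels because hitting a non-full bucket has probability bounded below by a function of $\varepsilon$ and $d$. (4) Account for the maximum-loop threshold $T$ and argue failures are rare enough not to affect the expectation asymptotically. (5) Assemble the final expression, tracking how the $\log d$ appears in the exponent.

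The main obstacle I expect is step (2)–(3): getting the branching factor of the eviction BFS under control in the multi-cell regime. In single-cell cuckoo hashing the tree-like structure of the cuckoo graph is classical, but with $d$ cells per bucket the "graph" becomes a $d$-uniform-ish hypergraph and the clean forest argument must be replaced by a more delicate estimate on how many items can pile into a small set of buckets; the constants $8$ and $15.8\ln\frac{1}{\varepsilon}$ are clearly tuned so that a union bound over all bad configurations of size $t$ converges, and reproducing that union bound — choosing the right failure event and summing a geometric-type series in $t$ — is the technically demanding part. A secondary subtlety is making the "Poisson approximation" for bucket loads rigorous; I would instead use a direct balls-in-bins tail bound (Chernoff) for the number of items hashing to a fixed bucket, which suffices since we only need that a bucket is non-full with constant-in-$n$ probability.
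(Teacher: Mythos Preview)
The paper does not actually prove Theorem~\ref{theorem1}: immediately after stating it, the authors write ``This theorem is based on the relevant proof in \cite{dietzfelbinger2007balanced}'' and move on. So there is nothing to compare your argument against in the paper itself --- the result is imported wholesale from Dietzfelbinger and Weidling's analysis of cuckoo hashing with constant-size bins.

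Your outline is broadly faithful to how that cited proof actually proceeds: one tracks a BFS-style eviction tree, bounds the probability that a long eviction sequence encounters only full buckets, and uses a counting/union-bound argument over bad bucket configurations to control the branching. One small technical slip: with two hash functions and $d$ cells per bucket the cuckoo graph is still an ordinary multigraph (each item is an edge between its two candidate buckets); what changes is the feasibility condition, which becomes ``every vertex has in-degree at most $d$'' rather than ``at most $1$'', so the relevant structural lemma is about $t$-vertex subgraphs spanning more than $dt$ edges, not a hypergraph density statement. Your step~(2) should therefore be phrased as bounding $\Pr[\text{some set of $t$ buckets receives more than $dt$ items}]$, and the specific constants $8$ and $15.8\ln\frac{1}{\varepsilon}$ fall out of that union bound in the original paper --- reproducing them exactly would require replicating Dietzfelbinger--Weidling's calculation rather than the heuristic Poisson estimate you sketch. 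For the purposes of this paper, though, simply citing \cite{dietzfelbinger2007balanced} is all that was intended; a full proof was never the goal.
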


This theorem is based on the relevant proof in \cite{dietzfelbinger2007balanced}.

Since the $LR$ is defined as $\frac{n}{dm}$, by setting $\varepsilon = \frac{dm}{n}-1 \geq \frac{1}{G}-1$, the expected time cost for inserting an item can be calculated as $(\frac{G}{1-G})^{O(\log d)}$. 
If we set $T$ as the maximum number of loops for L-CHT, then the worst-case insertion time cost can be further written as $O(T)$, or $O(1)$ if $T$ is not very large. 
Here, we provide an experiment to verify the above: We expand \alg{} starting from the minimum length and insert all the edges of NotreDame Dataset into it in sequence. 
It can be calculated that the average number of insertions per item in L-CHT and S-CHT considering the expansion is about 1.017 and 1.006, respectively, which is much less than $T$ ($T=250$ in the experiments in $\S$~\ref{sec:res}).

Then, we analyze the amortized cost of inserting $N$ edges into \alg{}. We assume that two hash functions $H_1, H_2$ in L-CHT are the same modular hash functions. The insertion time complexity of \alg{} can be summarized as follows:

\begin{theorem}\label{theorem2}
    Assume that the L/S-DL are never full during insertion procedure and inserting an edge into L-CHT (not triggering L-CHT expansion) costs 1 dollar, then the price of inserting $N$ edges into L-CHT will not exceed $3N$ dollars, and its expectation will not exceed $2.25N$ dollars.  
\end{theorem}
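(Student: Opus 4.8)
The plan is to use an amortized (accounting/potential) argument on the L-CHT, treating each successful insertion as costing one dollar and showing that a charge of three dollars per incoming edge suffices to pay for everything, including the re-insertions forced by cuckoo kick-outs and the occasional expansion. First I would clarify the cost model: inserting an edge $\langle u,v\rangle$ may (i) find $u$ already present, in which case only the $v$-side (Part 2 or an S-CHT) is touched and the L-CHT itself sees no work, or (ii) introduce a new $u$ into L-CHT, possibly triggering a chain of kick-outs. Only case (ii) incurs L-CHT insertion cost. Since $H_1=H_2$ is a single modular hash function, the two candidate buckets for $u$ coincide, so the L-CHT behaves like a single-table $d$-cell hashing scheme: an insertion either lands in a free cell of the target bucket (1 dollar) or evicts a resident $u'$ which must itself be re-inserted — but crucially, because both hashes agree, a kicked-out $u'$ maps back to the \emph{same} bucket, so a kick-out can only happen when that bucket is already full with $d$ residents, and re-insertion of $u'$ then necessarily also fails into L-DL (by the theorem's hypothesis the denylist is never full, so it absorbs the overflow). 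Thus the number of L-CHT writes per incoming new $u$ is at most $2$ (store $u$, plus at most one eviction-write), not an unbounded cascade; this is the key structural simplification that the "same modular hash functions" assumption buys us.

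Next I would account for expansion. Each L-CHT expansion doubles (or otherwise grows) the bucket array and re-inserts all currently stored $u$'s; by the transformation rule the table grows geometrically, so if we are willing to pre-pay, a standard doubling-array amortization charges each edge a constant for its eventual participation in rebuilds. Concretely I would set up the potential function $\Phi$ to be (a constant times) the number of $u$'s stored in the current L-CHT beyond the "half-full" watermark, so that $\Phi$ is zero right after an expansion and has grown to pay the full rebuild cost by the time the next expansion is triggered at loading rate $G$. Combining: 1 dollar for the insertion itself, at most 1 dollar for the single possible eviction-rewrite, and 1 dollar deposited toward future expansion — total $3N$ dollars as the worst-case bound.

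For the sharper expected bound $2.25N$, I would compute the expected number of evictions rather than using the worst case. Under the modular-hash / single-bucket model, a new $u$ causes an eviction only if its target bucket currently holds $d$ residents; with $dm=(1+\varepsilon)n$ and $\varepsilon \ge 1/G - 1$ the load is below $G<1$, and the probability that a uniformly chosen bucket is saturated is bounded by a small constant depending on $G$ and $d$ (a Chernoff/Poisson tail on $\mathrm{Binomial}(n,1/m)$ exceeding $d$). Choosing the parameters as in Theorem~\ref{theorem1} makes this probability at most $1/4$, so the expected eviction charge is $\le 0.25N$; adding the $1N$ for insertions and the $1N$ expansion deposit gives $2.25N$ in expectation. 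The main obstacle I anticipate is making the expansion-amortization watermark interact cleanly with the eviction accounting — in particular ensuring the potential never goes negative across an expansion boundary while simultaneously the geometric growth keeps the per-edge deposit at exactly $1$ — and pinning down the constant in the saturation-probability bound tightly enough to land on $1/4$ rather than merely "some constant"; this likely requires the explicit $d \ge \max\{8, 15.8\ln(1/\varepsilon)\}$ hypothesis to be invoked with the specific $\varepsilon$ coming from $G$.
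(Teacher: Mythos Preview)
Your proposal has a genuine structural mismatch with the cost model and misidentifies the role of the ``same modular hash functions'' assumption.

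First, the cost model: the theorem \emph{stipulates} that a single insertion into L-CHT (not triggering expansion) costs exactly 1 dollar. This dollar already absorbs any cuckoo kick-out chain that insertion produces. So your separate ``1 dollar for the insertion itself $+$ 1 dollar for the eviction-rewrite'' is double-counting; the paper simply books $N$ dollars total for the $N$ plain insertions and then asks how much the expansions add on top. Consequently the $3N$ bound has to decompose as $N + 2N$, where the $2N$ is the worst-case total re-insertion work over all merges --- not $N + N + N$ as you have it. Your ``1 dollar deposit toward future expansion'' is asserted but never derived; with the three-L-CHT transformation rule of Table~\ref{Transformation-Rule}, the $i$-th merge touches up to $2^{i}Gn$ items and the geometric sum comes out to $2(2^{k}-1)Gn \le 2N$, so the amortized expansion charge is 2, not 1.

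Second, and more seriously, you misread the purpose of the modular-hash assumption. It is not there to make $H_1$ and $H_2$ coincide and thereby cap kick-out chains; it is there to make \emph{resizing} cheap in expectation. With $h(x)=x\bmod m$, doubling the table to size $2m$ leaves an item in place with probability $1/2$ and quadrupling leaves it in place with probability $1/4$; hence the expected re-insertion cost of a merge of the three L-CHTs holding $x,y,z$ items is $\tfrac{1}{2}x+\tfrac{3}{4}(y+z)$ rather than $x+y+z$. Summing this over the merge sequence gives $\tfrac{5}{4}(2^{k}-1)Gn \le 1.25N$, and $N + 1.25N = 2.25N$. Your route to the $2.25N$ bound --- a Chernoff estimate on bucket-saturation probability landing at $1/4$ --- targets the wrong quantity (eviction probability is already hidden inside the axiomatic 1-dollar insertion) and would not recover the stated constant even if carried out.
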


\begin{proof}
    We first analyze the cost of merging and expansion: Assume that the 1st, 2nd, 3rd L-CHT stores $x, y, z$ distinct $u$ respectively. 
    When merging L-CHT, we re-hash every $u$ and re-insert it into the merged L-CHT if its hash value does not match its bucket index. The hash functions are the same modular hash functions, and the size of the merged L-CHT is 2 times larger than 1st L-CHT, and 4 times larger than 2nd and 3rd L-CHT. Hence, the probability to re-insert every $u$ is $\frac{1}{2}$ in 1st L-CHT, and $\frac{3}{4}$ in 2nd and 3rd L-CHT. In conclusion, the price of merging operation will not exceed $x+y+z$ dollars, and its expectation is $\frac{1}{2}x+\frac{3}{4}(y+z)$ dollars. 
    

    Then, we assume that after inserting $N$ edges, the 1st L-CHT has $2^kn$ cells, and it has $n$ cells before insertion. Assume that the 1st, 2nd, 3rd L-CHT stores $x_i, y_i, z_i$ distinct $u$ before $i$-th merging and expansion, then $x_i\leq 2^{i-1}Gn, y_i, z_i\leq 2^{i-2}Gn$. Hence, the total cost of merging and expansion will not exceed 
    \[
    2Gn+4Gn+8Gn+\cdots +2^{k}Gn = 2(2^k-1)Gn, 
    \]
    and its expectation is 
    \[
    \frac{5}{4}Gn+\frac{5}{2}Gn+\cdots +5\cdot 2^{k-3}Gn=\frac{5}{4}(2^k-1)Gn.
    \]
    If $N\leq 2Gn$, then the insertion costs $N$ dollars in total; otherwise, the $LR$ of L-CHT is smaller than $G$ but greater than $\frac{2}{3}G$, hence $N\geq \frac{2}{3}G\cdot (2^kn+2^{k-1}n) = 2^kGn$ and the total cost of merging and expansion will not exceed $2N$. In conclusion, the price of inserting $N$ edges into L-CHT will not exceed $N+2N=3N$ dollars, and its expectation will not exceed $N+\frac{5}{4}\cdot N=2.25N$. 
\end{proof}

\subsection{Memory Cost of \alg{}}
\label{sec:mem-cost}

In this part, we take both insertion and deletion operations into consideration. We first define a stable state for L/S-CHT and analyze its property. Then, we show the memory cost of \alg{} under the stable state. We assume that $\Lambda\leq \frac{2}{3}G$ in this part.

\begin{definition}
We define a group of L/S-CHTs as \textbf{stable}, if its overall loading rate ($LR$) is at least $\Lambda$. 
\end{definition}

The property of stable state is that, once a group of L/S-CHTs is stable, then it will be stable with high probability.

\begin{lemma}
    Assume that the number of graph items inserted into the L/S-CHTs at time $t$ are $l$ and $s$, respectively.
    If a group of L/S-CHTs is on stable state at time $t$, then it will always stay on stable state if the number of items in this group of L/S-CHTs is at least $l$ and $s$. 
\end{lemma}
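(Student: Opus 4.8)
The plan is to prove the lemma by induction on the sequence of insertions and deletions carried out after time $t$, with the invariant that at every quiescent moment the overall loading rate of the group is at least $\Lambda$ — i.e., writing $\text{cap}$ for the current number of cells and $N$ for the current number of items, that $\text{cap}\le N/\Lambda$. This is exactly the hypothesis at time $t$. Since the L-CHT chain and the S-CHT chain obey the same expansion/contraction rules, it is enough to run the induction for a single generic chain, with reference value $N^\star\in\{l,s\}$ (its item count at time $t$) and fixed minimum capacity $\text{cap}_{\min}$.

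There are four kinds of step. An insertion not triggering an expansion increases $N$ and leaves $\text{cap}$ fixed, so the invariant survives; a deletion not triggering a contraction survives by definition, since ``no contraction fired'' means the post-deletion loading rate is still $\ge\Lambda$. For an expansion: reading Table~\ref{Transformation-Rule}, an expansion fires when the loading rate reaches $G$ and multiplies $\text{cap}$ by a factor in $[\tfrac43,\tfrac32]$, so immediately afterwards the loading rate is at least $\tfrac{2}{3}G\ge\Lambda$, using the standing hypothesis $\Lambda\le\tfrac23 G$. For a contraction: it fires only when a single deletion pushed the loading rate just below $\Lambda$, and each contraction step shrinks $\text{cap}$ by a bounded factor (at most $\tfrac34$, reading off the reverse of the transformation rule of $\S$\ref{b-structure}); hence one contraction multiplies the loading rate by a factor $\ge\tfrac43$, which restores it above $\Lambda$ as long as $\text{cap}$ is not within a constant of $0$.

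The remaining point is that a single deletion may set off a cascade of contractions heading toward $\text{cap}_{\min}$. Each step strictly shrinks $\text{cap}$, so the cascade is finite, and it stops either because the loading rate has climbed back to $\ge\Lambda$ (done) or because $\text{cap}_{\min}$ has been reached, at which point the loading rate is $N/\text{cap}_{\min}$. Here the hypotheses do the work: stability at time $t$ gives $N^\star\ge\Lambda\,\text{cap}(t)\ge\Lambda\,\text{cap}_{\min}$, and the assumption that the item count never drops below $N^\star$ gives $N\ge N^\star\ge\Lambda\,\text{cap}_{\min}$, i.e. the loading rate is $\ge\Lambda$ even at the minimum size. (This is precisely why the lemma needs ``$\ge l$'' and ``$\ge s$'': otherwise the chain could collapse to $\text{cap}_{\min}$ while holding too few items and cease to be stable.) Feeding these terminal cases back into the induction yields the invariant at all times $\ge t$, which is the lemma.

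The step I expect to be the real obstacle is making the contraction case airtight. The reverse transformation is only sketched in $\S$\ref{b-structure}: removing one S-CHT from a chain redistributes its entries over the survivors, which can leave the individual tables at very different loading rates, so ``the overall loading rate of an S-CHT chain'' cannot simply be treated as one table undergoing a clean halving. Determining exactly how much total capacity each contraction removes as a function of the chain's configuration — and confirming the bounded-factor claim in every case — is the delicate part, with the crossover between the S-CHT regime and Part~2's $2R$ small slots a secondary technicality to be checked directly rather than through the generic estimates.
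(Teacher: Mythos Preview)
Your plan is sound and would work, but it is considerably more elaborate than what the paper does. You share the paper's treatment of the expansion case exactly: expansion fires at loading rate $G$, enlarges capacity by a factor in $[4/3,3/2]$, so the post-expansion loading rate is at least $\tfrac{2}{3}G\ge\Lambda$.

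Where you diverge is the contraction case. You try to control each contraction step locally (bounding the shrink factor, worrying about cascades, falling back to $\text{cap}_{\min}$), and you flag this as the hard part. The paper bypasses all of that with a one-line barrier argument: since items never drop below $l$ (resp.\ $s$), the total capacity can never drop below its value $\text{cap}(t)$ at time $t$. The reason is that the possible capacities form a discrete ladder and contraction moves down one rung at a time; to go below $\text{cap}(t)$ you would have to contract \emph{from} $\text{cap}(t)$, but at that capacity the loading rate is at least $l/\text{cap}(t)\ge\Lambda$, so the contraction trigger cannot fire. This single observation replaces your whole contraction analysis: you never need the explicit $3/4$ bound on the shrink factor, you never need to think about cascades, and the crossover to the $2R$-small-slot regime is irrelevant because the barrier sits at $\text{cap}(t)$, not at $\text{cap}_{\min}$.

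In short: your approach is correct but works harder than necessary. The paper's monotonicity-of-capacity argument is the cleaner route, and it dissolves precisely the technical obstacles you singled out as delicate.
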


\begin{proof}
    Since a group of L/S-CHTs is on stable state, its $LR$ must be greater than $\Lambda$. Then, once its $LR$ is greater than $G$, then the hash table will expand $\frac{4}{3}$ or $\frac{3}{2}$ times to its original size. 
    And once its $LR$ is less than $\Lambda$, the hash table will contract if possible. 
    As a result, if the number of items in this group of L/S-CHTs is at least $l$ and $s$, then the size of the L/S-CHTs will not be smaller after time $t$.
    Therefore, its $LR$ is still greater than $\Lambda$ after expansion. 
\end{proof}

\begin{theorem}
Assume that the L/S-DL are never full during insertion procedure and the L/S-CHTs are all on stable state. The upper bound of cells is $\frac{|V|}{\Lambda}$ for L-CHT and $\frac{|E|}{\Lambda}$ for all S-CHT (not including the L/S-DL), where $|V|$ denotes the number of distinct nodes, and $|E|$ denotes the number of distinct edges.  
\end{theorem}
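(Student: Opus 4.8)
The plan is to reduce the statement to the defining inequality of the stable state. By definition a group of L/S-CHTs is stable iff its overall loading rate $LR$ is at least $\Lambda$, and from Theorem~\ref{theorem1} the loading rate of a multi-cell cuckoo hash table equals $\frac{(\text{number of stored items})}{(\text{number of cells})}$. So the whole proof amounts to: (i) bound the number of items stored in the L-CHT group by $|V|$ and the number of items stored across all S-CHT chains by $|E|$, and then (ii) rearrange $LR\ge\Lambda$ to turn these item bounds into the claimed cell bounds $\frac{|V|}{\Lambda}$ and $\frac{|E|}{\Lambda}$.

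First I would treat the L-CHT. The items stored in the L-CHT chain are exactly the distinct source nodes $u$ currently resident in some cell of some L-CHT. By the Insertion procedure of $\S$~\ref{b-operations} (Step~1 queries before inserting, so no $u$ is duplicated, and a kicked-out $u$ is re-inserted rather than copied), this count is at most the number $|V|$ of distinct nodes; the $u$'s that overflowed into L-DL are excluded both by the hypothesis that L/S-DL are never full and by the ``not including the L/S-DL'' clause. Since the L-CHT group is stable, $\frac{\#\{\text{stored }u\}}{\#\text{cells of L-CHT}}\ge\Lambda$, hence $\#\text{cells of L-CHT}\le\frac{|V|}{\Lambda}$.

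Next I would treat the S-CHTs. The subtlety here is that there is one S-CHT chain per high-degree node $u$, each separately assumed stable, so I would argue chain by chain and sum. For the chain attached to $u$, the stored items are the neighbors $v$ of $u$ actually held in that chain, each appearing once (again by Step~1 of Insertion); writing $c_u$ for the total number of cells of this chain, stability gives $c_u\le\frac{d_u}{\Lambda}$, where $d_u$ is the number of such neighbors — note the $\le 2R$ neighbors of a low-degree node live in Part~2 small slots and contribute no S-CHT cells. Summing over all high-degree $u$, the total number of S-CHT cells is at most $\frac{1}{\Lambda}\sum_u d_u\le\frac{|E|}{\Lambda}$, since the per-node neighbor sets are disjoint slices of the edge set and the denylist contents are excluded by hypothesis.

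I expect the only real obstacle to be the bookkeeping rather than any inequality: one must verify carefully that (i) every distinct node/edge is counted at most once across the relevant tables — which rests entirely on the query-before-insert step — and (ii) the items parked in Part~2 small slots and in the L/S-DL are correctly excluded, so that ``items in L-CHT $\le|V|$'' and ``items in all S-CHTs $\le|E|$'' hold exactly. Once that accounting is pinned down, the theorem is just the one-line rearrangement of $LR\ge\Lambda$ applied to the L-CHT group and summed over the S-CHT chains.
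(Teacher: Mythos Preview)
Your proposal is correct and follows essentially the same approach as the paper: bound the items in L-CHT by $|V|$ and in each S-CHT chain by the out-degree $f_i$, then rearrange the stable-state inequality $LR\ge\Lambda$ and sum over chains. If anything, your bookkeeping (query-before-insert for uniqueness, distinguishing low-degree nodes whose neighbors sit in Part~2 small slots) is more explicit than the paper's own two-line argument.
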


\begin{proof}
We first analyze the number of cells in L-CHT: Since there are $|V|$ distinct nodes in the graph, they occupy at most $|V|$ cells in L-CHT. The lower bound of $LR$ is $\Lambda$ on stable state,  so L-CHT has at most $\frac{|V|}{\Lambda}$ cells. 

Then, we analyze the number of cells in S-CHT: Assume that $V=\{u_1, \cdots, u_{|V|}\}$, and the number of edges starting from $u_i$ is $f_i$. Since all groups of S-CHTs are on stable state, then the S-CHT for $u_i$ has at most $\frac{f_i}{\Lambda}$ cells. Hence, all S-CHT occupy at most $\frac{f_1}{\Lambda}+ \cdots +\frac{f_{|V|}}{\Lambda} \leq\frac{|E|}{\Lambda}$ cells.
\end{proof}



\subsection{Discussions}

In this part, we provide Table \ref{tab:compare}  to summarize the time and space complexities of \alg{} and some state-of-the-art schemes ($i.e.$, competitors mentioned in $\S$~\ref{sec:res}). Here, $K$ refers to the length/width of the matrix, which is a parameter of WBI.
In summary, our \alg{} has an insertion and query time complexity of $O(1)$ when Theorem \ref{theorem1} holds and $T$ is not very large, 
while its space complexity is still $O(|E|)$. 
The previous analysis in $\S$~\ref{sec:mem-cost} also proves that its space overhead is very small.

\vspace{-0.05in}

\section{Evaluation}
\label{sec:res}

In this section, we evaluate the performance of \alg{} through extensive experiments, which are briefly described as follows:
1) We introduce the experimental setup in $\S$~\ref{exp:Setup};
2) We evaluate how key parameters affect \alg{} in $\S$~\ref{exp:Settings};
3) We verify the effect of \textsc{Denylist} optimization through ablation experiments in $\S$~\ref{exp:Ablation};
4) We evaluate the insertion, query, and deletion throughput as well as memory usage of \alg{} and its competitors in $\S$~\ref{exp:TM};
5) We evaluate the running time of \alg{} and its competitors on graph analytics tasks (BFS, SSSP, TC, CC, PR, BC, LCC) in $\S$~\ref{exp:Analytics};
6) We deploy \alg{} on Redis and Neo4j databases and evaluate the speed in $\S$~\ref{Redis-I} and $\S$~\ref{Neo4j-I}, respectively.

\begin{table*}[h]
\centering
\renewcommand\arraystretch{1.2}
\caption{A brief analysis of the graph datasets used.}
\label{table:datasets}
\resizebox{1.5\columnwidth}{!}{
\begin{tabular}{lcrrrrrr}
\hline Graph Dataset & Weighted? & \# Nodes & \# Edges & \# Edges (dedup) & Avg. Deg. & Max. Deg. & Edge Density \\
\hline CAIDA & \ding{52} & 0.51M & 27.12M & 0.85M & $1.66$ & $17950$
 & $3.26\times10^{-6}$ \\
NotreDame & \ding{56} & 0.33M & 1.50M & 1.50M & $4.60$ & $10721$
 & $1.41\times10^{-5}$ \\
StackOverflow & \ding{52} & 2.60M & 63.50M & 36.23M & $13.92$ & $60406$ & $5.35\times10^{-6}$  \\
WikiTalk & \ding{52} & 2.99M & 24.98M & 9.38M & $3.14$ & $146311$
 & $1.05\times10^{-6}$ \\
Weibo & \ding{56} & 58.66M & 261.32M & 261.32M & $4.46$ & $278491$
 & $7.60\times10^{-8}$ \\
DenseGraph & \ding{56} & 8K & 57.59M & 57.59M & $7199.16$ & $14537$
 & $0.90$ \\
SparseGraph & \ding{56} & 5M & 30M & 30M & $6$ & $6$ & $1.20\times10^{-6}$ \\
\hline
\end{tabular}}
\end{table*}

\subsection{Experimental Setup}
\label{exp:Setup}

\bbb{Platform: }We conduct all the experiments on a 18-core CPU server (36 threads, Intel(R) Core(TM) i9-10980XE CPU @ 3.00GHz) with 128GB DRAM memory.
It has 64KB L1 cache, 1MB L2 cache for each core, and 24.75MB L3 cache shared by all cores.

\bbb{Implementation: }We implement \alg{} and the other competitors with C++ and build them with g++ 7.5.0 and -O3 option.
The hash functions we use are 32-bit Bob Hash (obtained from the open-source
website \cite{bobhash}) with random initial seeds. 
For \alg{}, we set $R=3$, as well as the ratio of the number of buckets in the two arrays of L/S-CHT is 2:1, and whether the basic or extended version of \alg{} is used depends on whether the dataset has repeated edges.
%

\bbb{Competitors: }Since there are many related works on dynamic graph storage, we rigorously select some of the SOTA ones from recent years for experimental comparison: 
LiveGraph \cite{zhu2020livegraph}, Sortledton \cite{fuchs2022sortledton}, Wind-Bell Index (WBI) \cite{qiuwind}, and Spruce \cite{shi2024spruce}.

\bbb{Datasets: }We use various graph datasets to comprehensively evaluate the performance of \alg{} and its competitors, and the details are shown in Table \ref{table:datasets}. 
1) The CAIDA dataset \cite{caida} is streams of anonymized IP traces collected by CAIDA. Each flow is identified by a five-tuple: source and destination IP addresses, source and destination ports, protocol. The source and destination IP addresses in the traces are used as the start and end nodes of the graphs, respectively.
2) The NotreDame dataset \cite{ndwg} is a web graph collected from University of Notre Dame. Nodes represent web pages, and directed edges represent hyperlinks between them.
3) The StackOverflow dataset \cite{Stack} is a collection of interactions on the stack exchange website called Stack Overflow. Nodes represent users and edges represent user interactions.
4) The WikiTalk dataset \cite{Wiki} is a collection of user communications obtained from English Wikipedia, and the nodes and edges refer to the same as above.
5) The Weibo dataset \cite{weibo} is captured from Sina Weibo Open Platform APIs, and the definitions of nodes and edges are similar to those of StackOverflow.
6) We synthesize the DenseGraph dataset.
7) We synthesize the SparseGraph dataset.

\bbb{Metrics: }We use the following key metrics.

\begin{itemize}[leftmargin=1em]

\item \textbf{Throughput: }It is defined as Million Operations Per Second (Mops). We use Throughput to evaluate the average insertion, query, and deletion speed.

\item \textbf{Memory Usage: }It is defined as the memory used to store a specified amount of edges.

\item \textbf{Running Time: }It is defined as the time spent performing the specified graph analytics tasks.

\end{itemize}



\begin{figure*}[t!]
\centering
    \subfigure[Insertion Throughput]{
    \includegraphics[width=4.2cm]{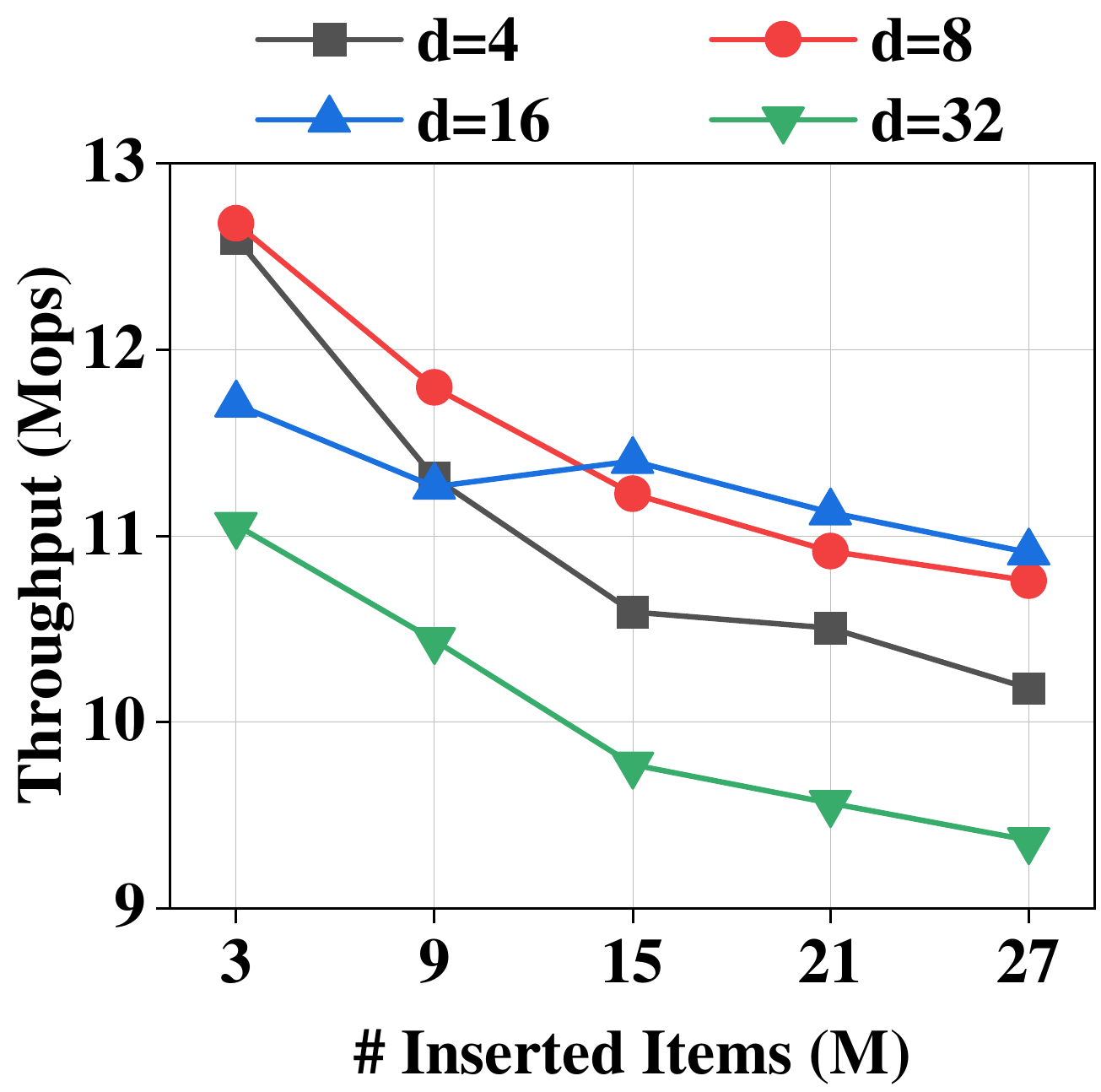}
    \label{para:d:it}
    }
    \subfigure[Query Throughput]{
    \includegraphics[width=4.2cm]{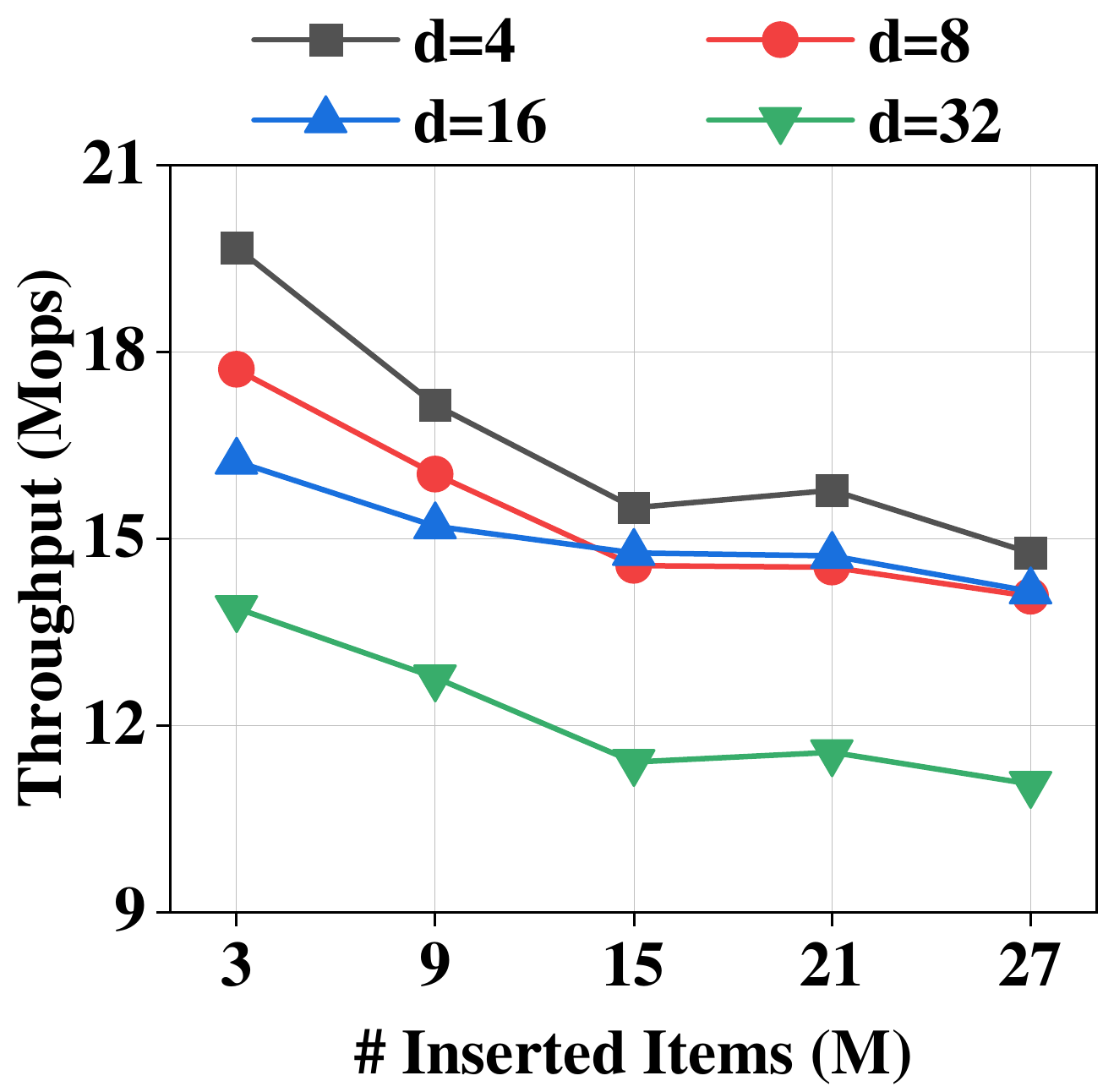}
    \label{para:d:qt}
    }
    \subfigure[Memory Usage]{
    \includegraphics[width=4.1cm]{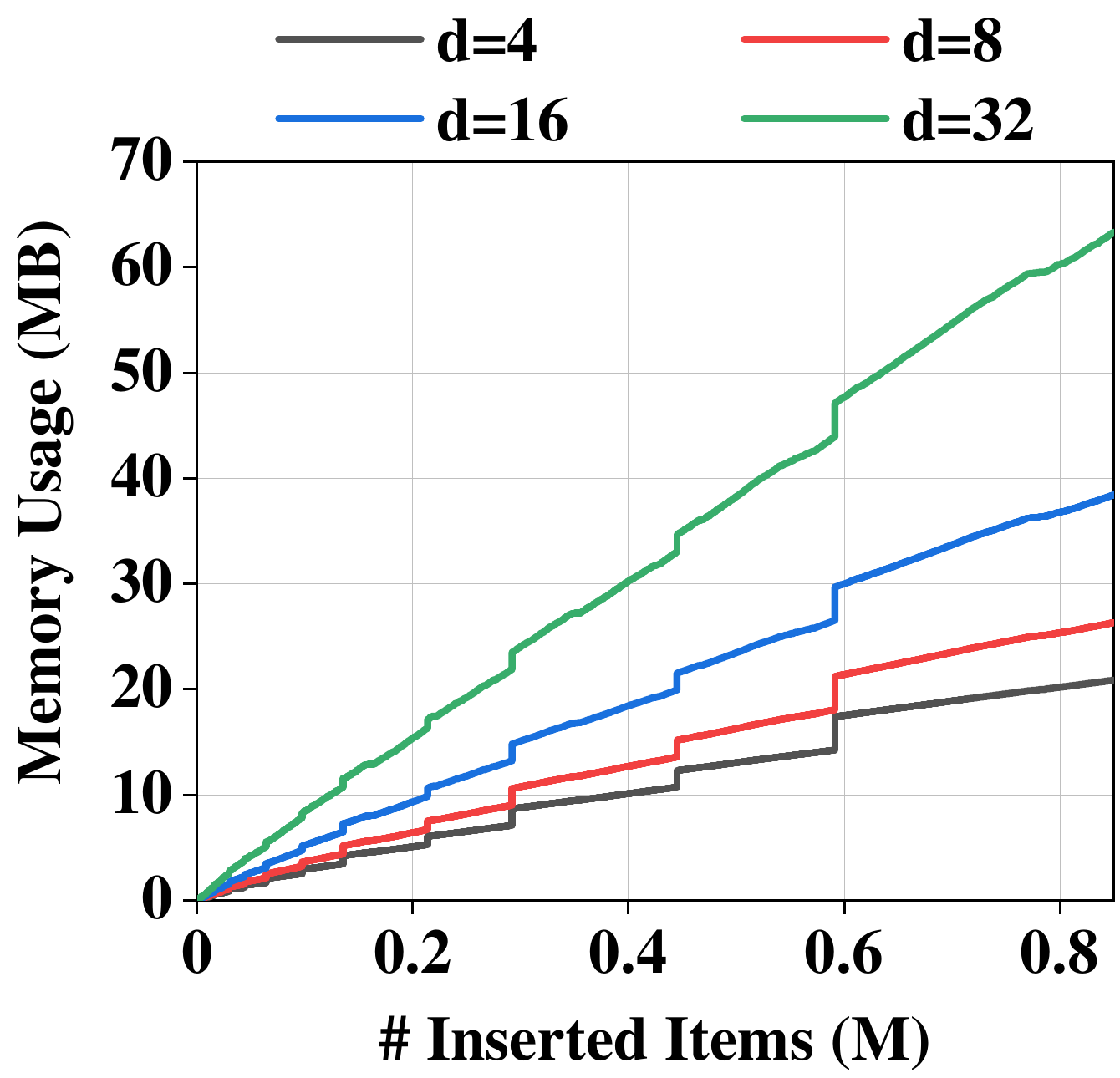}
    \label{para:d:mu}
    }
\caption{Tuning experiments for parameter $d$.}
\label{para:d}
\vspace{-0.1in}
\end{figure*}

\begin{figure*}[t!]
\centering
    \subfigure[Insertion Throughput]{
    \includegraphics[width=4.2cm]{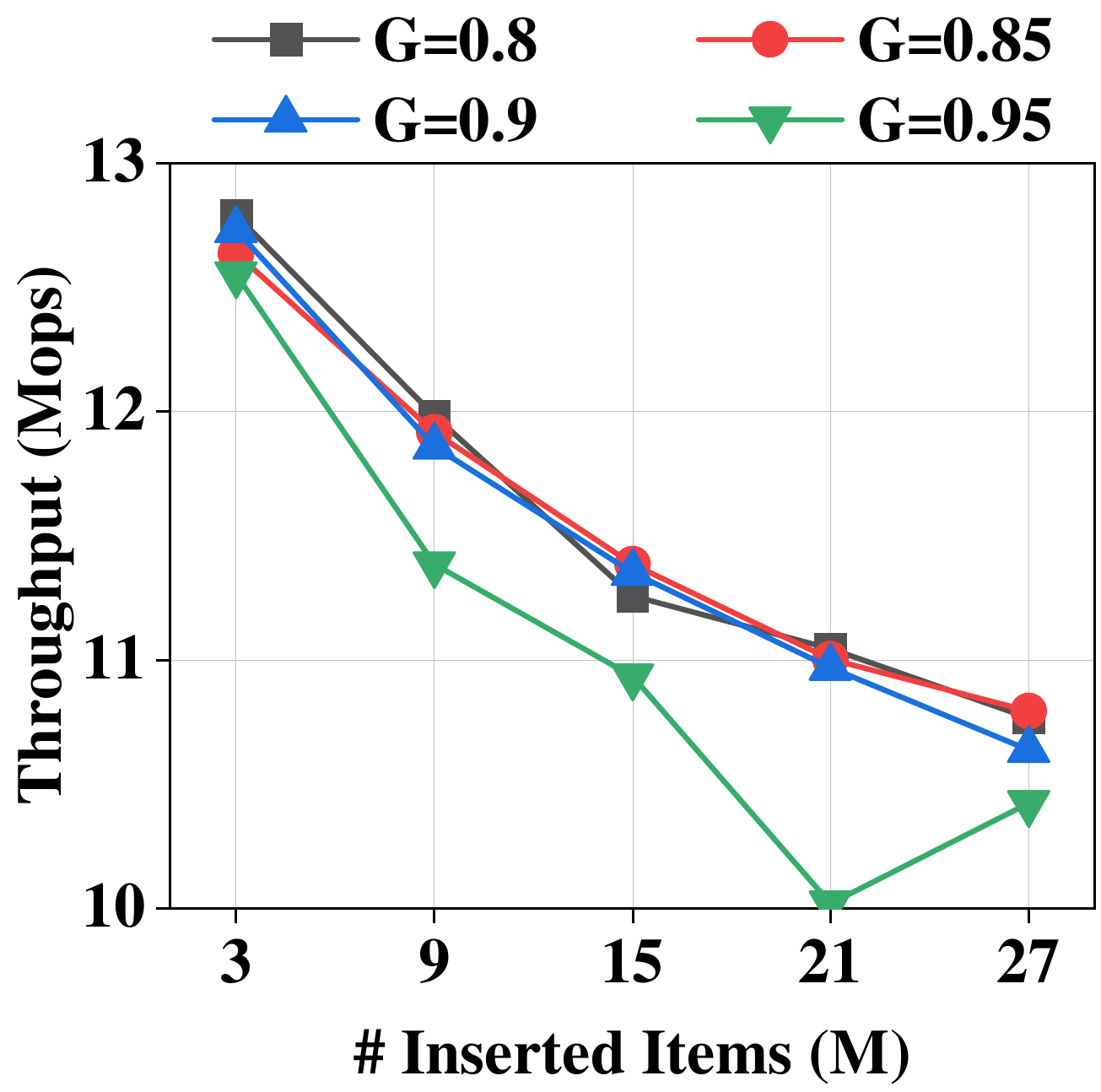}
    \label{para:G:it}
    }
    \subfigure[Query Throughput]{
    \includegraphics[width=4.2cm]{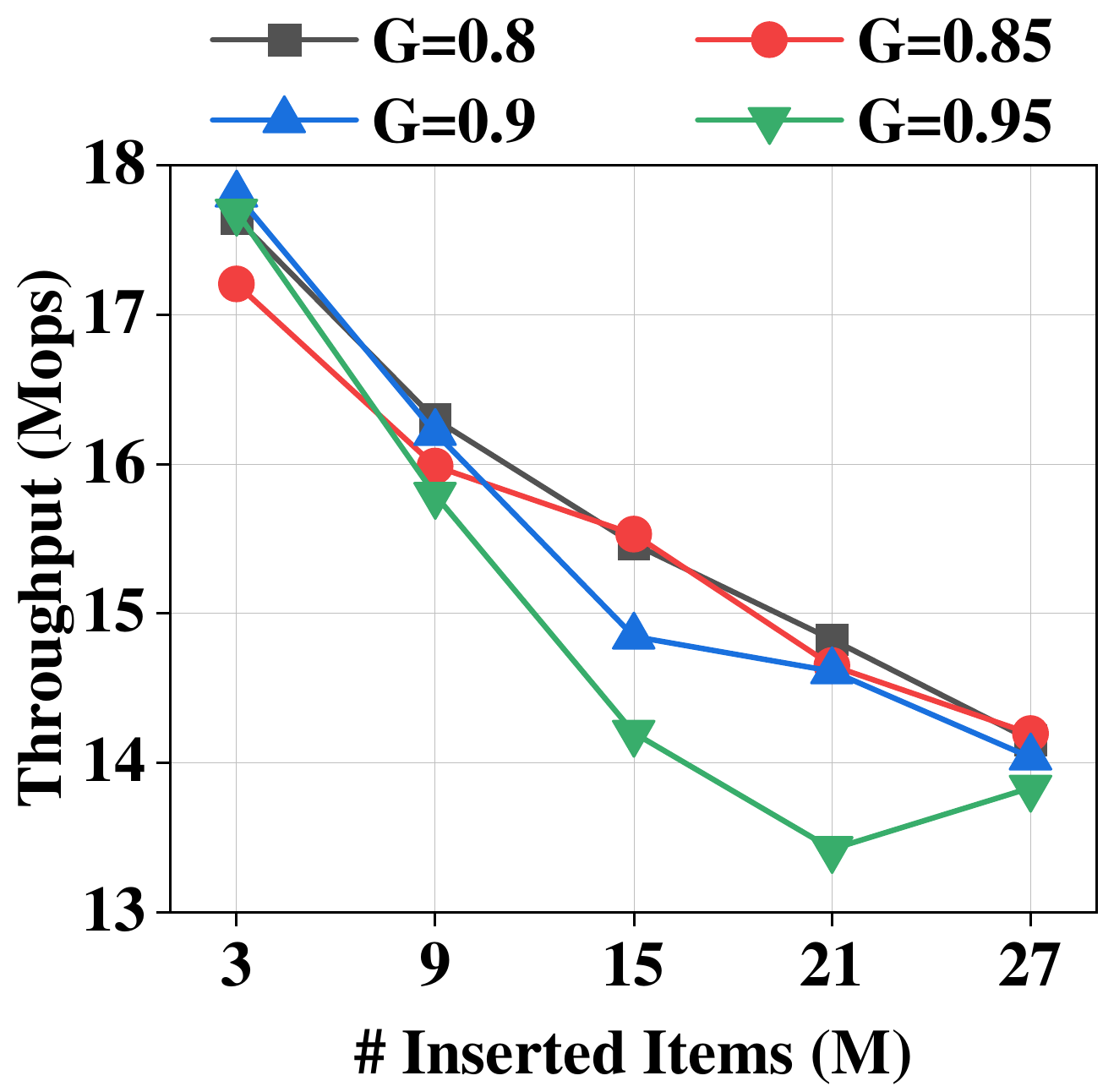}
    \label{para:G:qt}
    }
    \subfigure[Memory Usage]{
    \includegraphics[width=4.1cm]{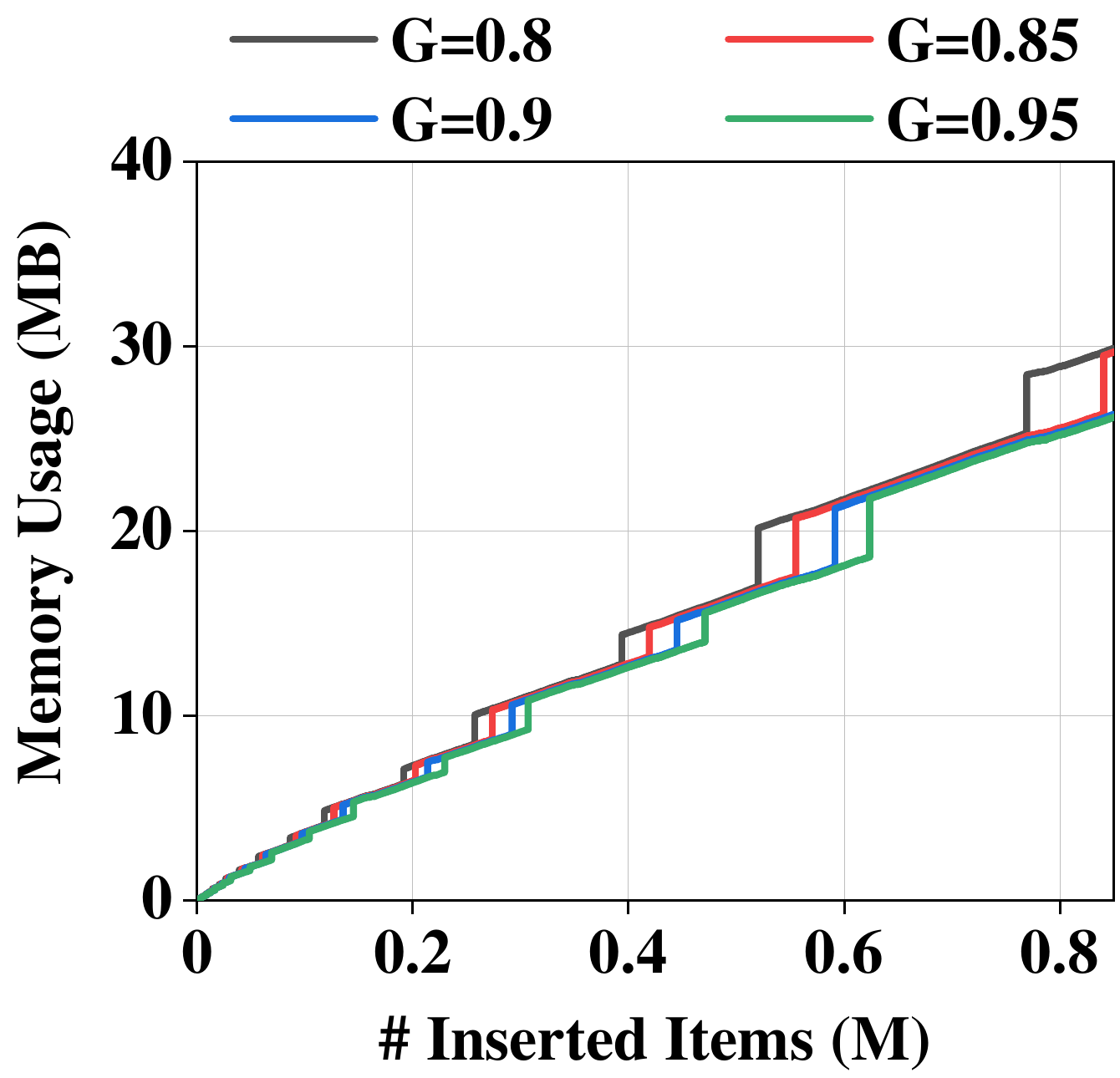}
    \label{para:G:mu}
    }
\caption{Tuning experiments for parameter $G$.}
\label{para:G}
\vspace{-0.1in}
\end{figure*}

\begin{figure*}[t!]
\centering
    \subfigure[Insertion Throughput]{
    \includegraphics[width=4.2cm]{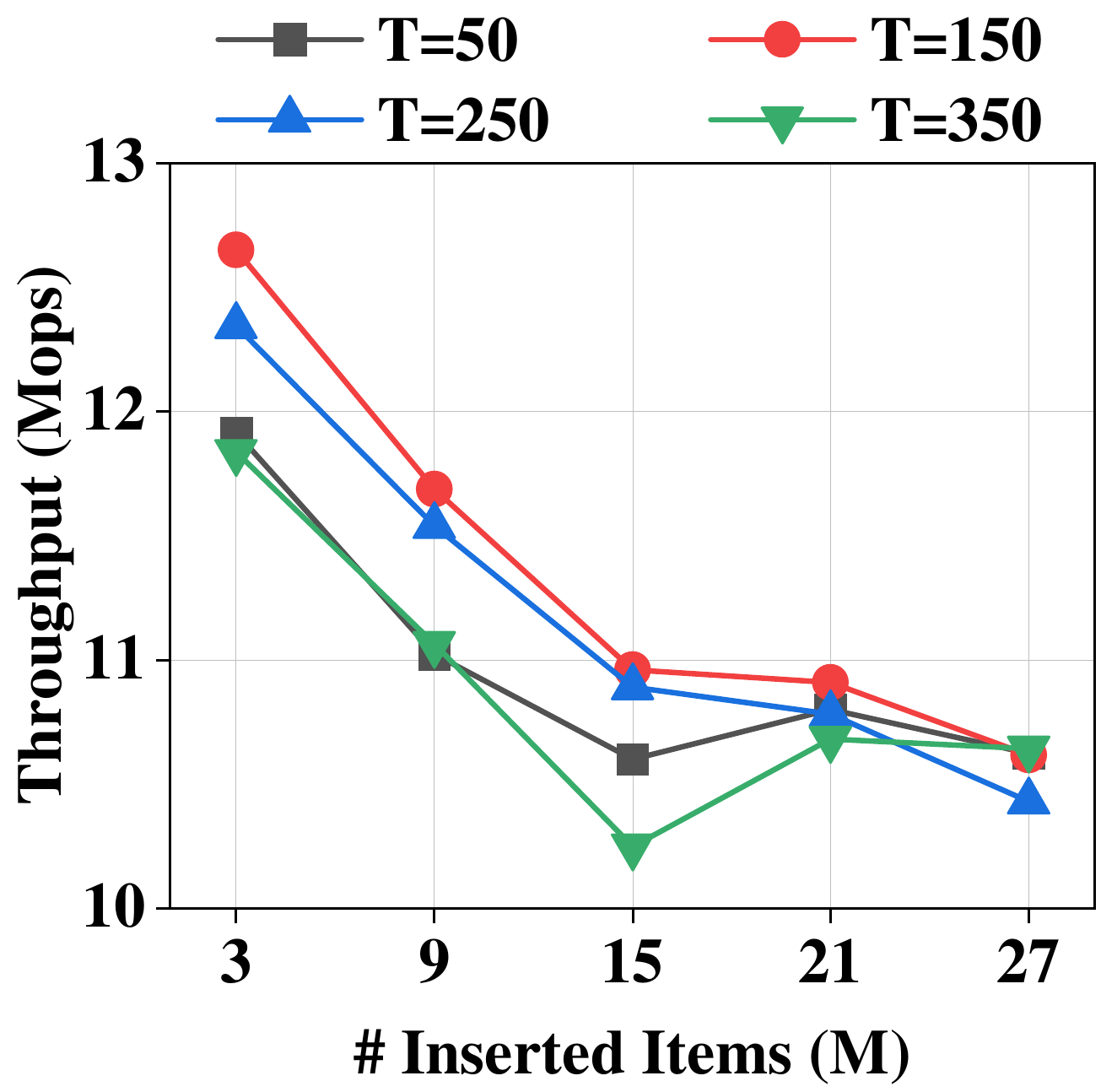}
    \label{para:T:it}
    }
    \subfigure[Query Throughput]{
    \includegraphics[width=4.2cm]{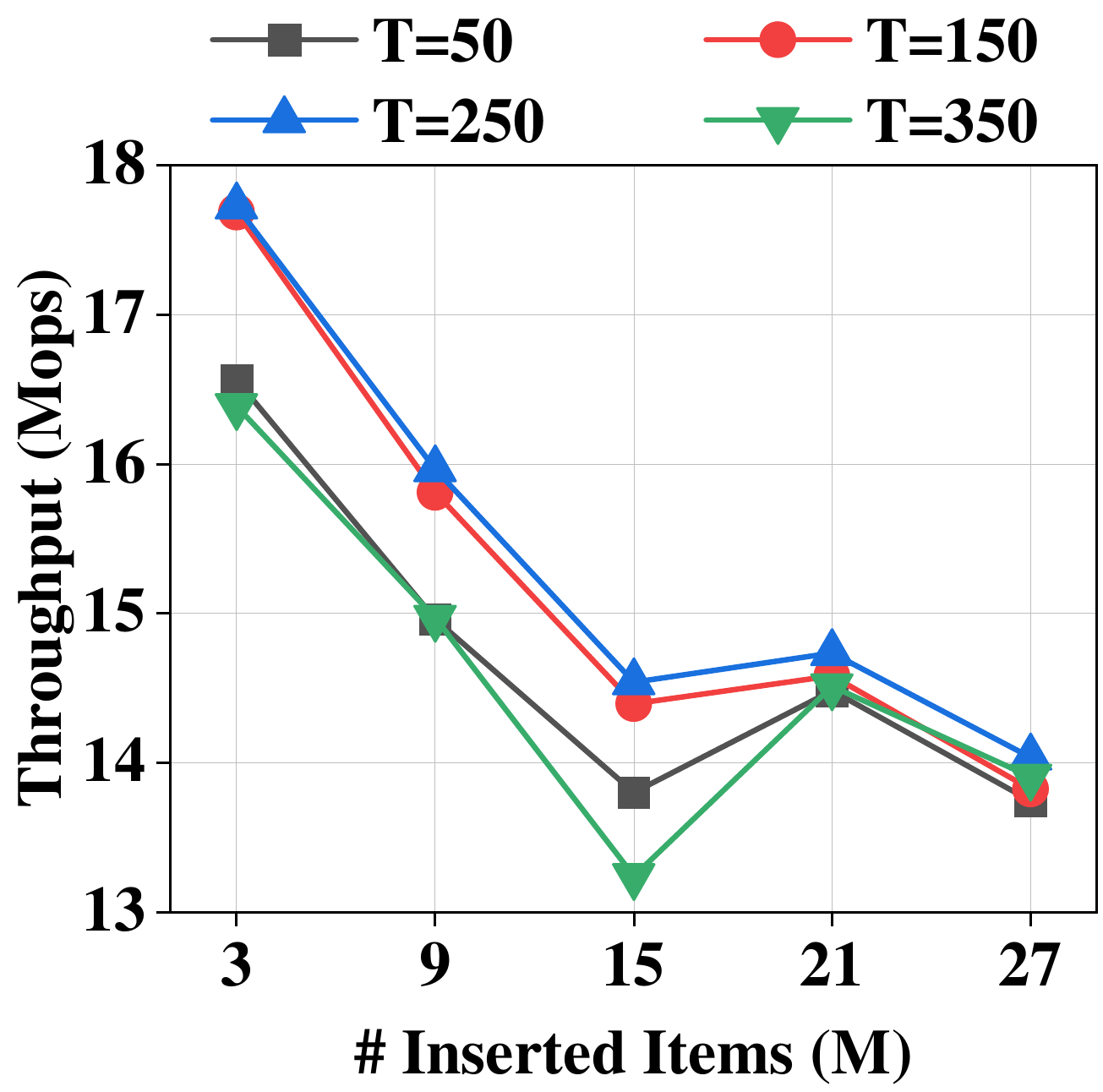}
    \label{para:T:qt}
    }
    \subfigure[Memory Usage]{
    \includegraphics[width=4.1cm]{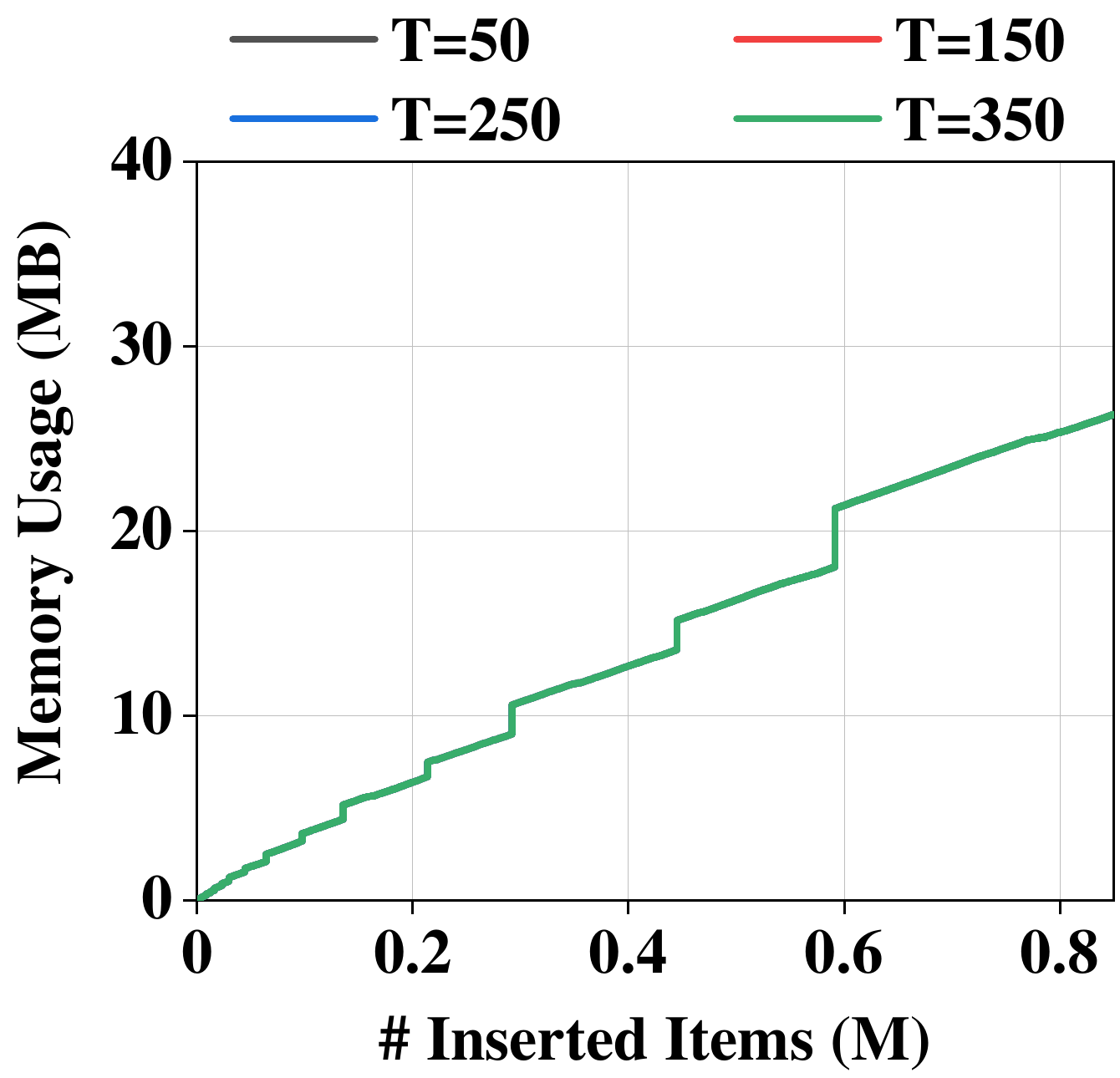}
    \label{para:T:mu}
    }
\caption{Tuning experiments for parameter $T$.}
\label{para:T}
\vspace{-0.1in}
\end{figure*}

\subsection{Experiments on Parameter Settings}
\label{exp:Settings}

In this subsection, we measure the effects of some key parameters for \alg{}, namely, the number of cells per bucket in L/S-CHT $\textbf{\textit{d}}$, the preset $LR$ threshold for expansion $\textbf{\textit{G}}$, and the maximum number of loops in L/S-CHT $\textbf{\textit{T}}$.
This experiment evaluates the effects by: 1) We first batch inserting edges in the CAIDA dataset into \alg{} and then batch querying them from \alg{}, and measure the average throughput separately; 2) We measure the memory usage by continuously inserting edges.

\bbb{Effects of $d$ (Figure \ref{para:d:it})-\ref{para:d:mu}: }
\textit{Our experimental results show that the optimal values of $d$ is 4 and 8.}
We find that $d=8$ and $d=4$ enable the fastest insertion and query throughput of \alg{}, respectively.
Also, the memory usage of \alg{} with $d=4$ and $d=8$ is the least and second least, respectively. 
Considering that smaller $d$ means smaller $LR$, we set $d=8$.

\bbb{Effects of $G$ (Figure \ref{para:G:it})-\ref{para:G:mu}: }
\textit{Our experimental results show that the overall performance is best when the value of $G$ is 0.9.}
We find that the insertion and query throughput of \alg{} with $G$ of 0.8, 0.85, and 0.9 are very close to each other, and all are faster than the one at $G$ of 0.95. 
In addition, the larger $G$ is, the smaller the memory usage of \alg{} is.
Thus, we set $G=0.9$ after the above considerations.

\bbb{Effects of $T$ (Figure \ref{para:T:it})-\ref{para:T:mu}: }\textit{The experimental results show that \alg{} achieves most ideal  performance at $T$ of 150 and 250.}
We find that \alg{} has the fastest insertion and query throughput when $T$ is 150 and 250, respectively.
Meanwhile, different values of $T$ make no difference to the memory usage of \alg{}.
Hence, we set $T=250$.

\subsection{Ablation Experiments}
\label{exp:Ablation}

In this subsection, we conduct ablation experiments to evaluate the individual effects of \textsc{Denylist} (DL) optimization on \alg{} performance to verify its effectiveness.
Our methodology is that every time an insertion failure occurs, we expand the size of \alg{} to $1.5\times$ its original size.
We use the CAIDA dataset and evaluate the effects in terms of insertion and query throughput as well as memory usage.
%


\begin{figure*}[t!]
\centering
    \subfigure[Insertion Throughput]{
    \includegraphics[width=4.4cm]{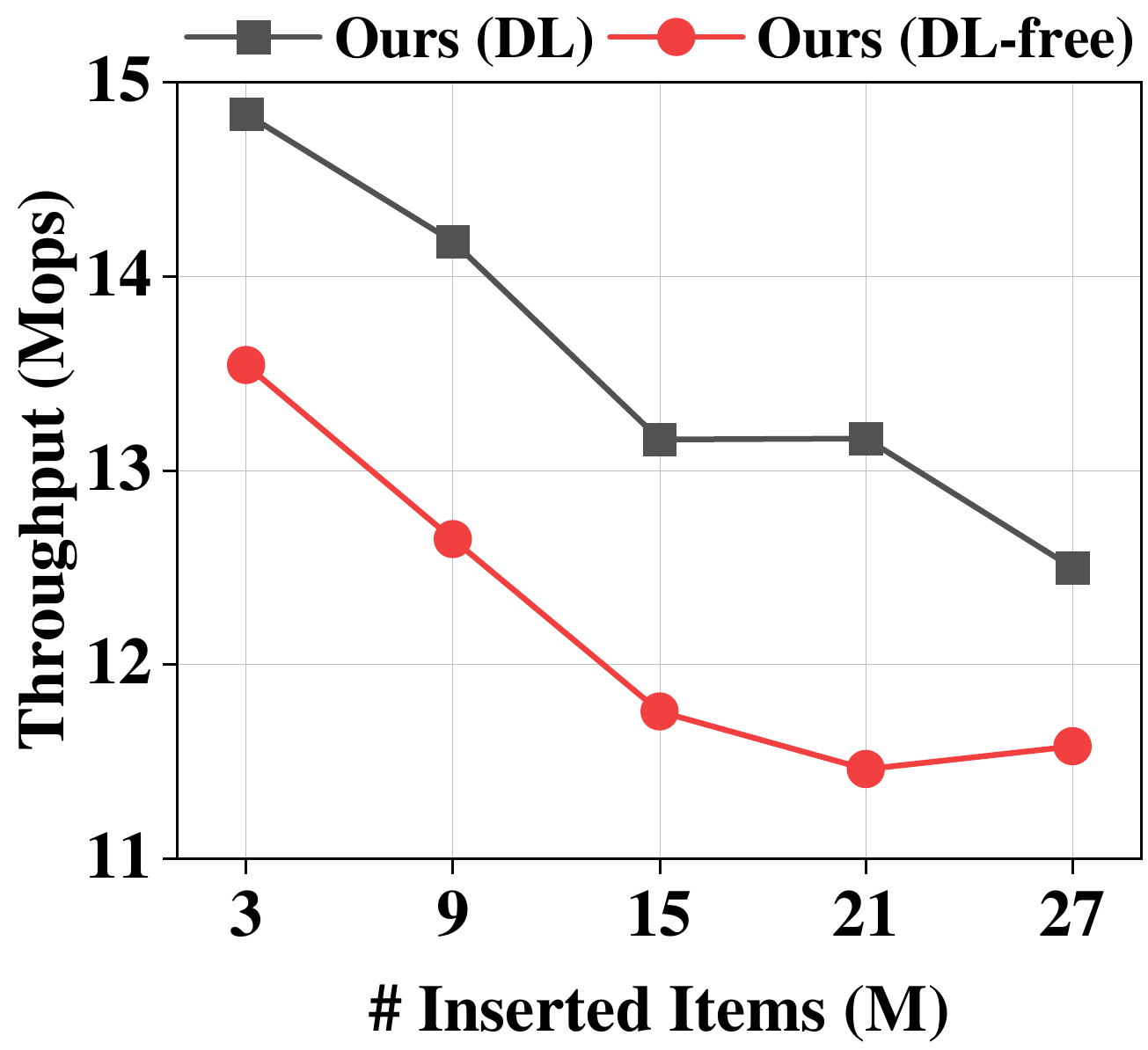}
    \label{DL:it}
    }
    \subfigure[Query Throughput]{
    \includegraphics[width=4.4cm]{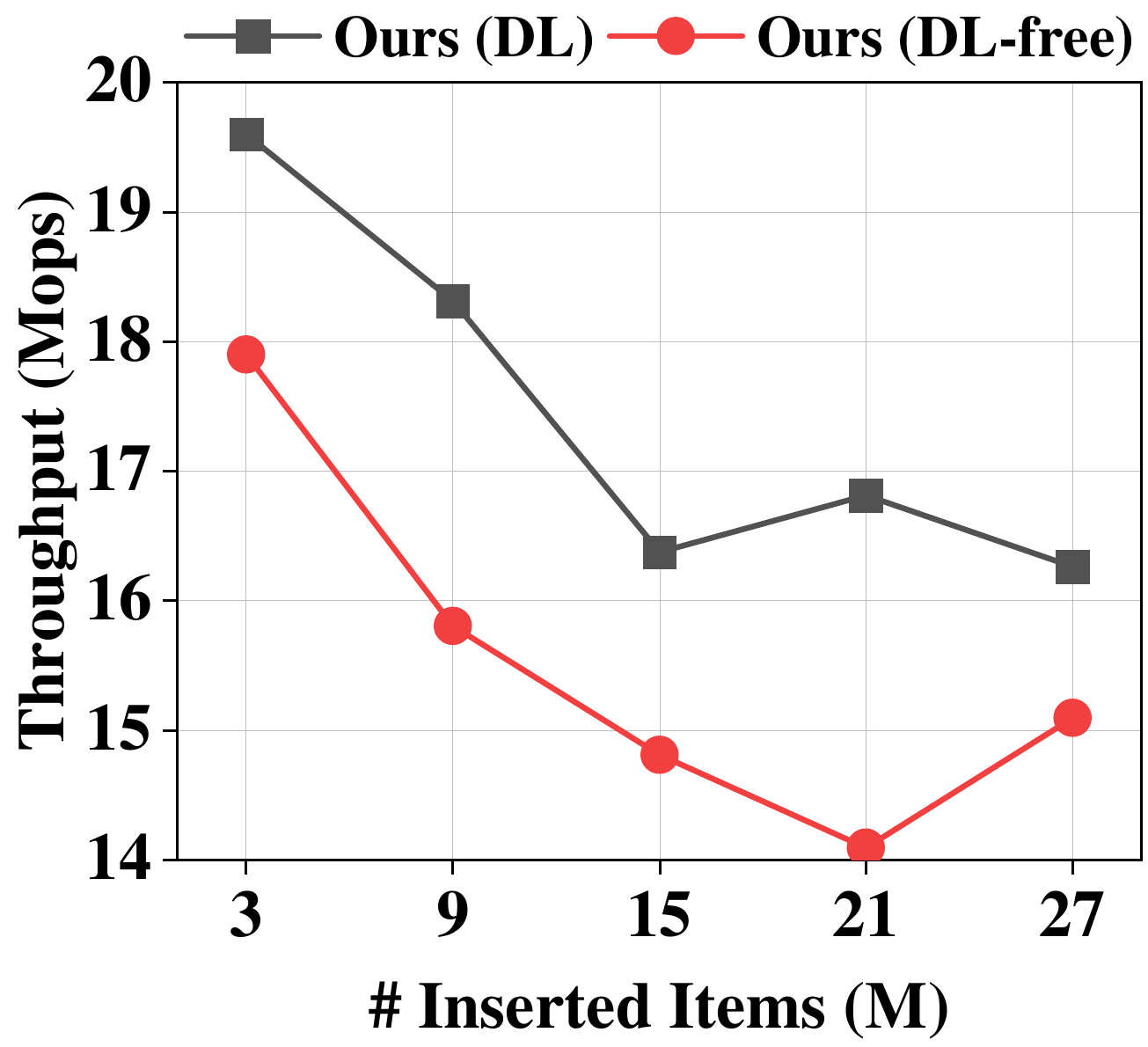}
    \label{DL:qt}
    }
    \subfigure[Memory Usage]{
    \includegraphics[width=4.5cm]{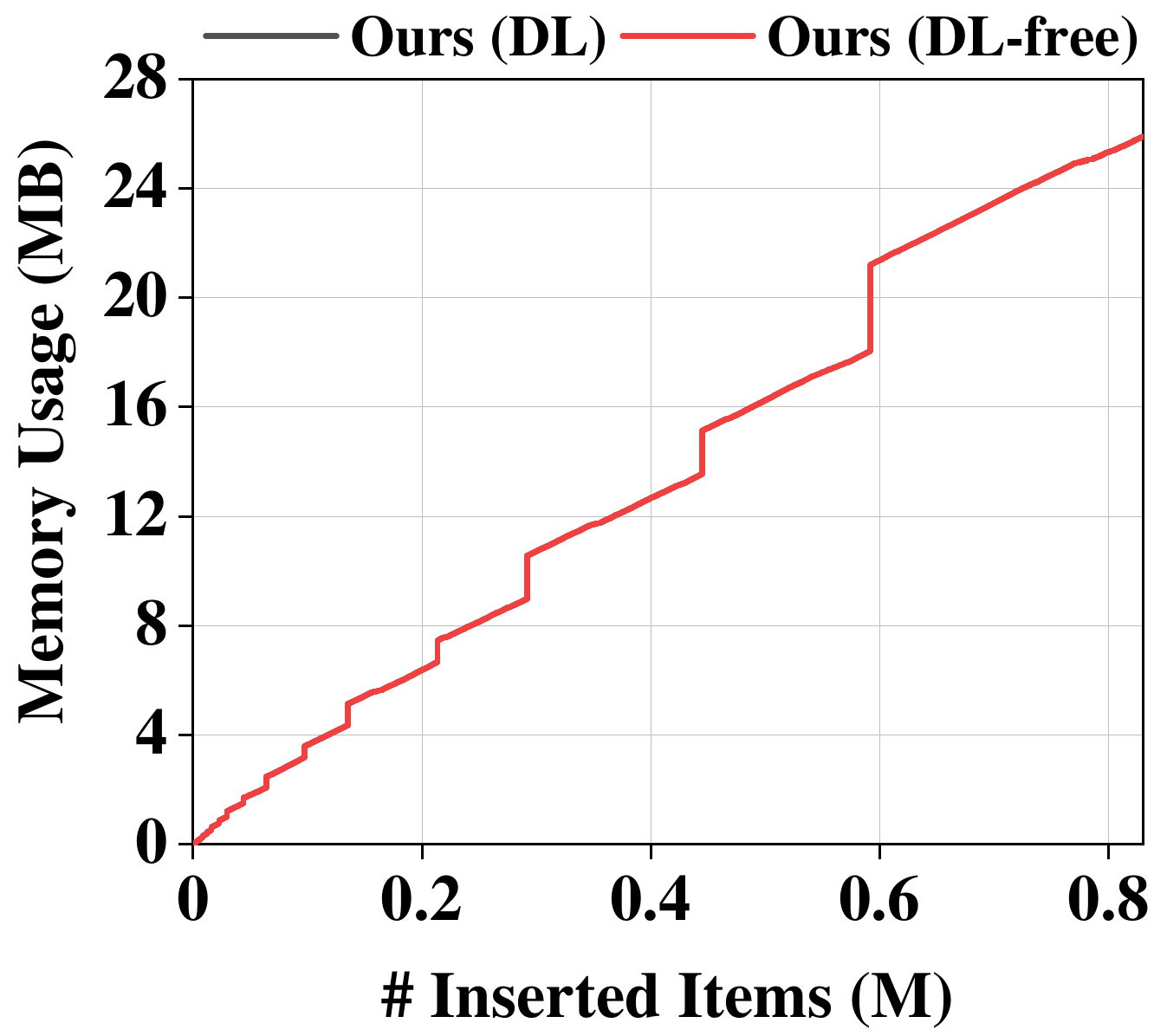}
    \label{DL:mu}
    }
\caption{Ablation experiments: \alg{} with and without DL optimization.}
\label{exp:DL}
\vspace{-0.1in}
\end{figure*}

\bbb{Effects of DL (Figure \ref{exp:DL}): }\textit{The experimental results show that DL indeed further speeds up insertion and querying with almost no additional memory overhead.}
We find that the insertion and query throughput of \alg{} with DL optimization is $1.11\times$ and $1.12\times$ faster than that of \alg{} without DL optimization, respectively.
Also, the memory usage of \alg{} with DL optimization is only about 4KB more than that of \alg{} without DL optimization when all items are inserted.

\subsection{Experiments on Throughput and Memory Usage}
\label{exp:TM}

In this subsection, we evaluate the performance of \alg{} and its competitors in terms of insertion, query, and deletion throughput and memory usage on various graph datasets.

\bbb{Methodology: }1) We insert all edges from the graph dataset into an empty graph structure, and calculate the average insertion throughput;
2) We then query all edges from the graph structure and calculate the average query throughput.
3) We delete edges one by one and calculate the throughput of the process after deletions.
4) We first de-duplicate the datasets to obtain non-duplicated edges, and then insert them into each scheme one by one. 
After each insertion, the physical memory overhead at that moment is output.

\bbb{Insertion throughput (Figure \ref{exp:thp-i}): }The results show that, on the seven datasets, the insertion throughput of \alg{} is $72.17\times$, $32.66\times$, $8.60\times$, and $253.32\times$ faster than that of LiveGraph, Spruce, Sortledton, and WBI on average, respectively.

\begin{figure}
    \centering    \includegraphics[width=0.98\linewidth]{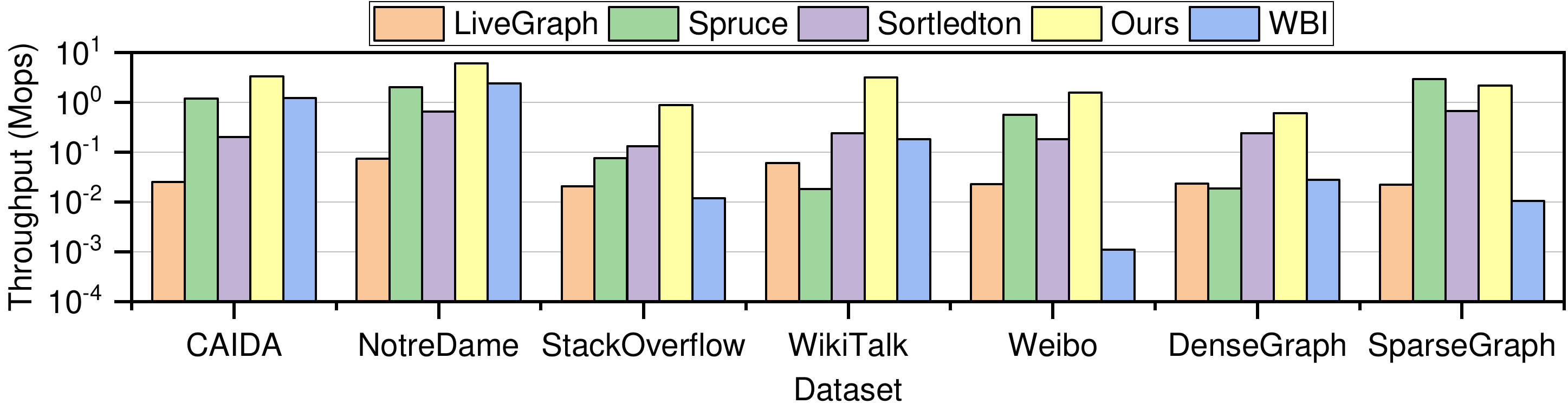}
    \vspace{0.05in}
    \caption{Insertion throughput on different datasets.}
    \label{exp:thp-i}
\end{figure}

\bbb{Query throughput (Figure \ref{exp:thp-q}): }The results show that, on the seven datasets, the query throughput of \alg{} is $14.69\times$, $133.62\times$, $5.34\times$, and $287.48\times$ faster than that of LiveGraph, Spruce, Sortledton, and WBI on average, respectively.

\begin{figure}
    \centering    \includegraphics[width=0.98\linewidth]{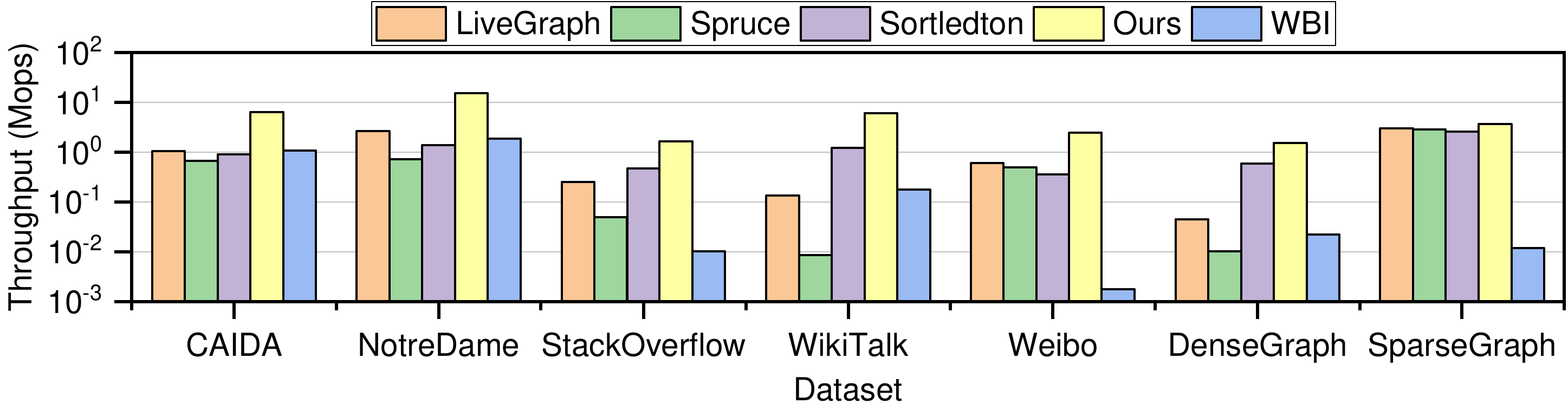}
    \vspace{0.05in}
    \caption{Query throughput on different datasets.}
    \label{exp:thp-q}
    \vspace{-0.1in}
\end{figure}

\bbb{Deletion throughput (Figure \ref{exp:thp-d}): }The results show that, on the seven datasets, the deletion throughput of \alg{} is $85.47\times$, $3.63\times$, $5.01\times$, and $65.55\times$ faster than that of LiveGraph, Spruce, Sortledton, and WBI on average, respectively.

\begin{figure}
    \centering    \includegraphics[width=0.98\linewidth]{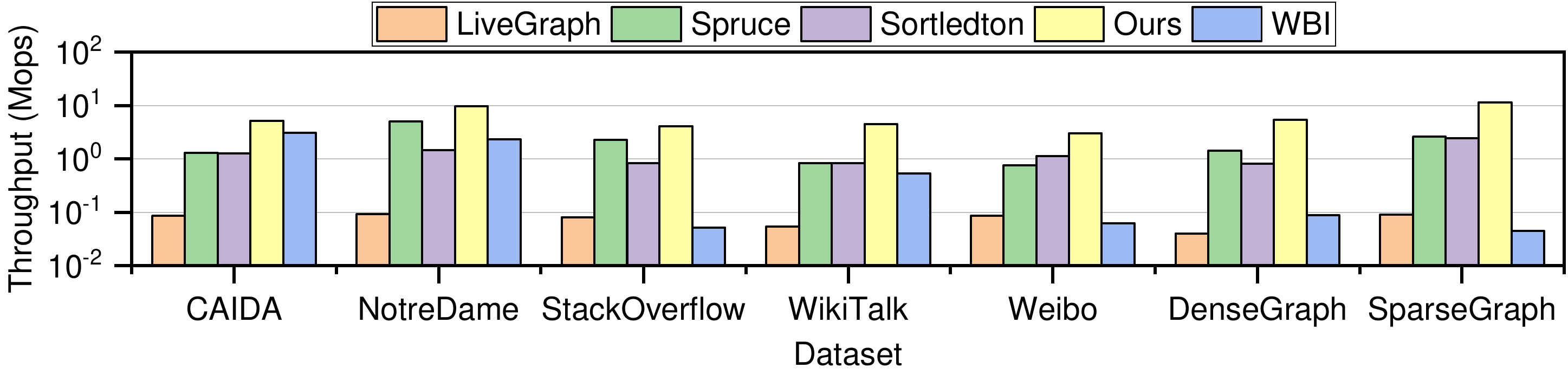}
    \vspace{0.05in}
    \caption{Deletion throughput on different datasets.}
    \label{exp:thp-d}
    \vspace{-0.1in}
\end{figure}

\bbb{Analysis: }1) Thanks to the novel data structure of \alg{}, when inserting or querying an edge, even in the worst case, only 6 buckets in L-CHT and S-CHT, as well as two Denylists, are accessed.
Since the size of the bucket and Denylist is fixed, the upper limit on the number of memory accesses is also fixed and small.
Therefore, no matter how the incoming dataset changes, \alg{} can achieve fast insertion and query.
In contrast, other competitors are designed based on the adjacency list or its variants, so an edge insertion/query operation often requires multiple memory accesses and cannot adapt well to changes in the amount and characteristics of the dataset.
2) For deletions, other schemes simply delete the target when it is found, while \alg{} may involve additional contraction operations.

\begin{figure*}[t!]
\centering
    \subfigure[CAIDA]{
    \includegraphics[width=4.0cm]{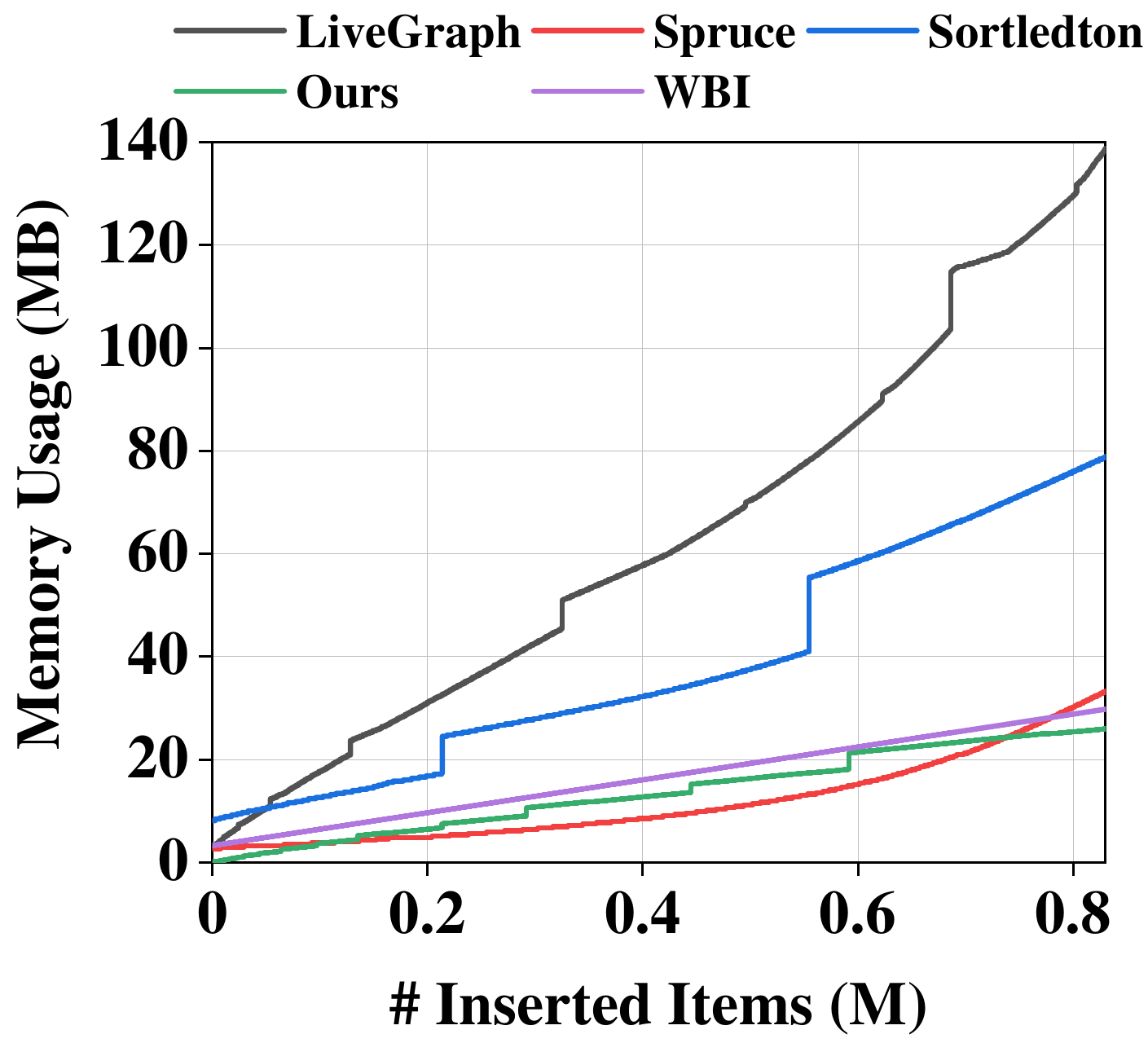}
    \label{mem-c}
    }
    \subfigure[NotreDame]{
    \includegraphics[width=4.0cm]{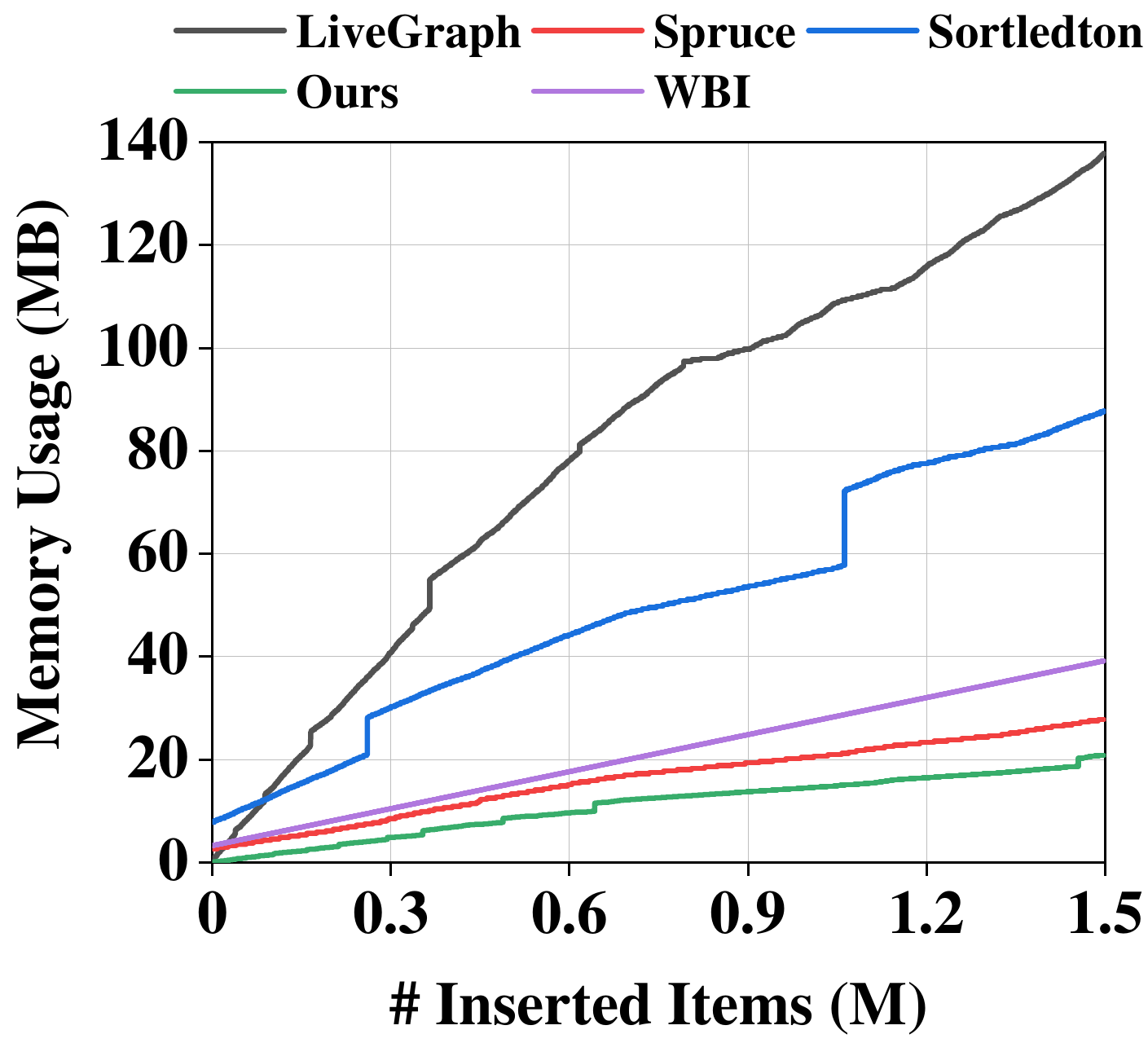}
    \label{mem-n}
    }
    \subfigure[StackOverflow]{
    \includegraphics[width=4.1cm]{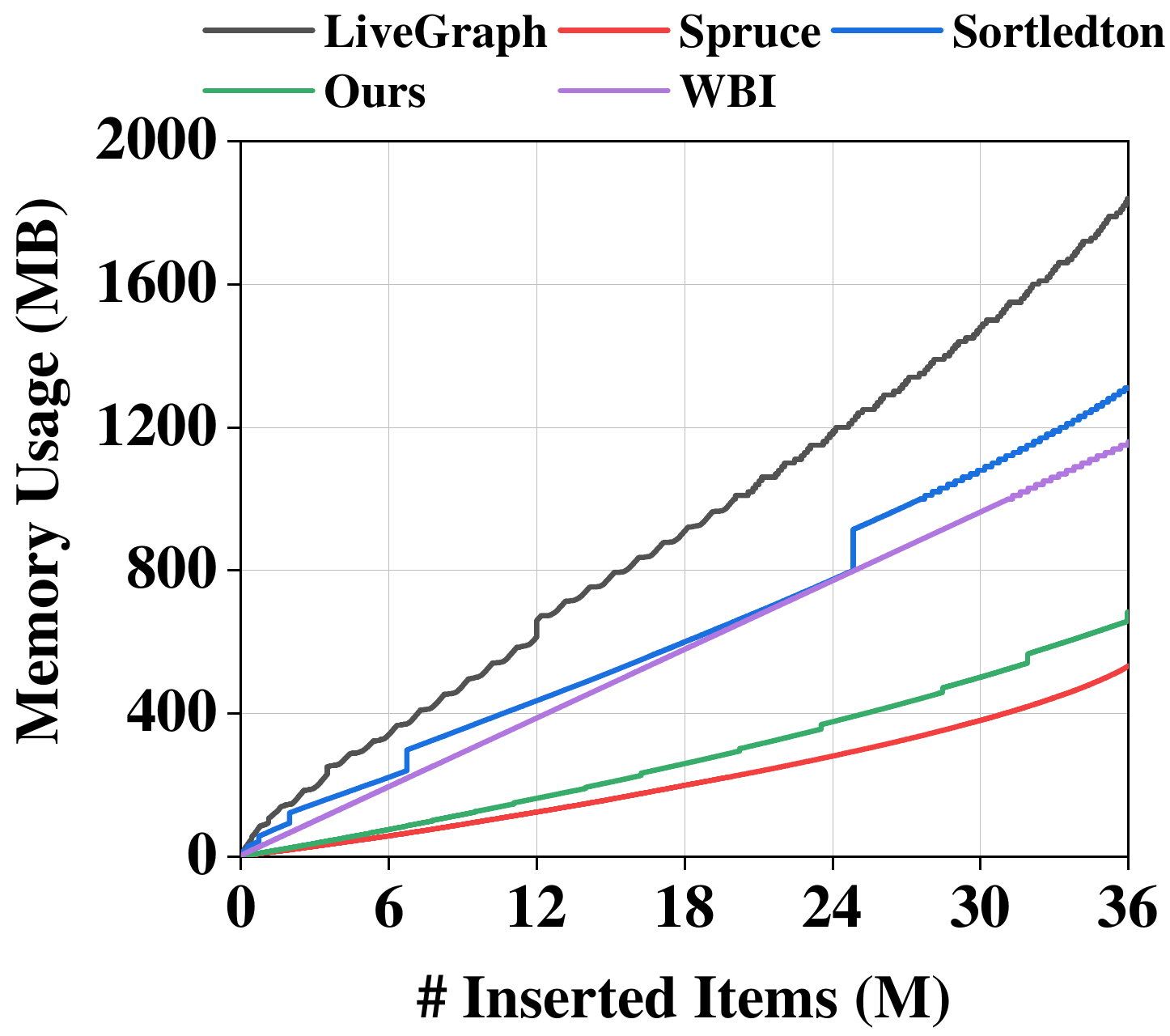}
    \label{mem-s}
    }
    \subfigure[WikiTalk]{
    \includegraphics[width=4.0cm]{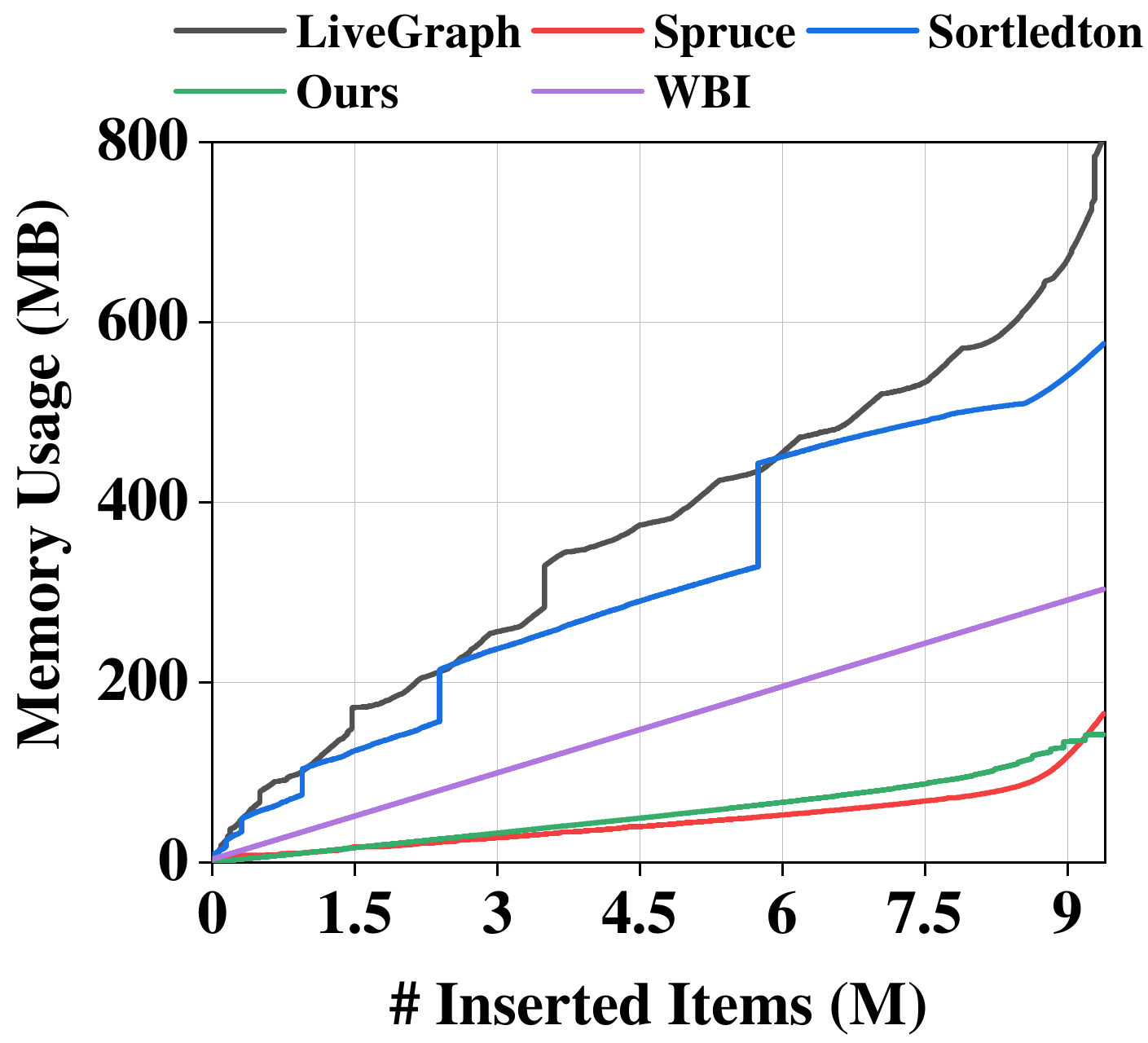}
    \label{mem-wk}
    }
    \subfigure[Weibo]{
    \includegraphics[width=4.4cm]{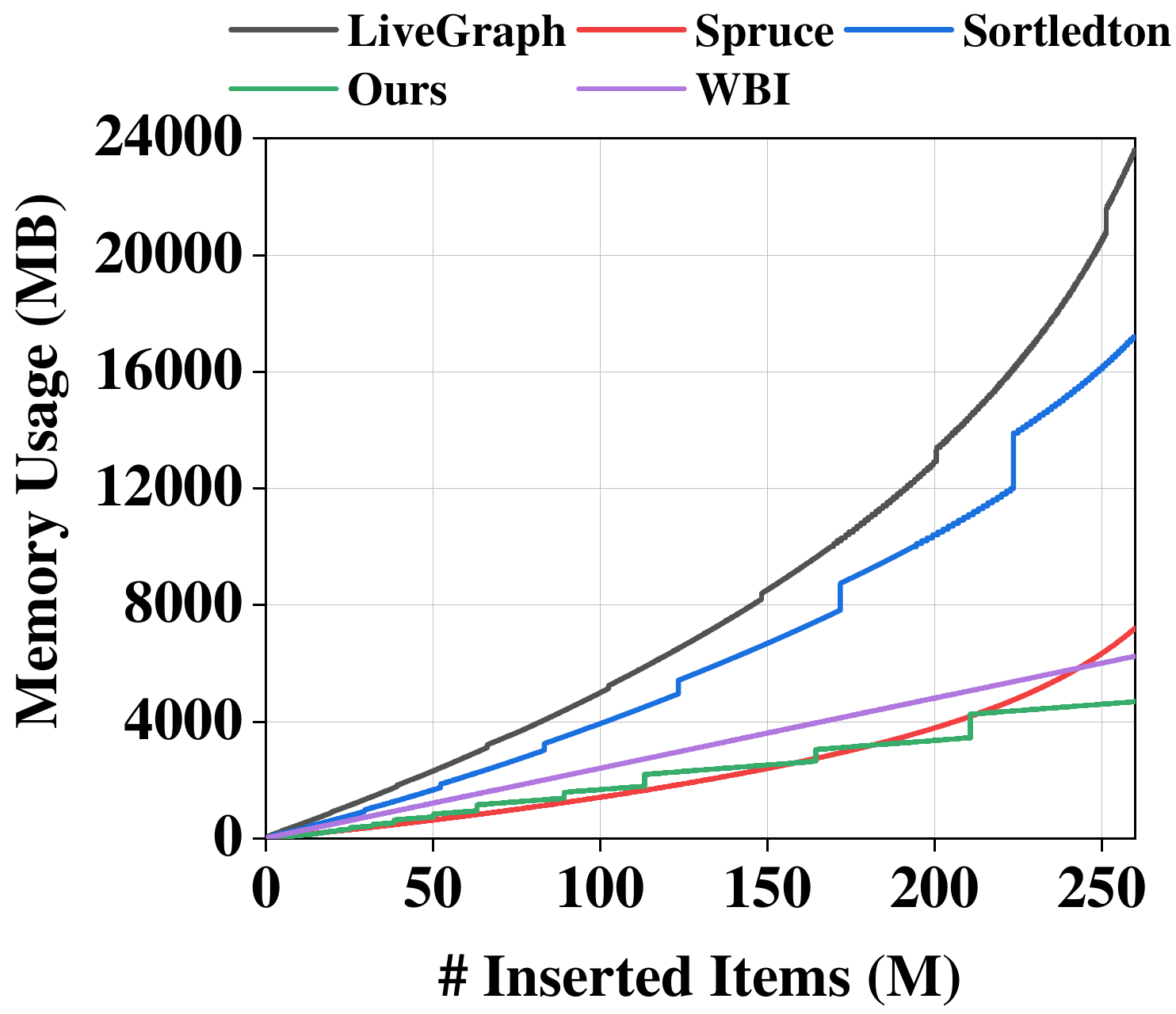}
    \label{mem-wb}
    }
    \subfigure[DenseGraph]{
    \includegraphics[width=4.4cm]{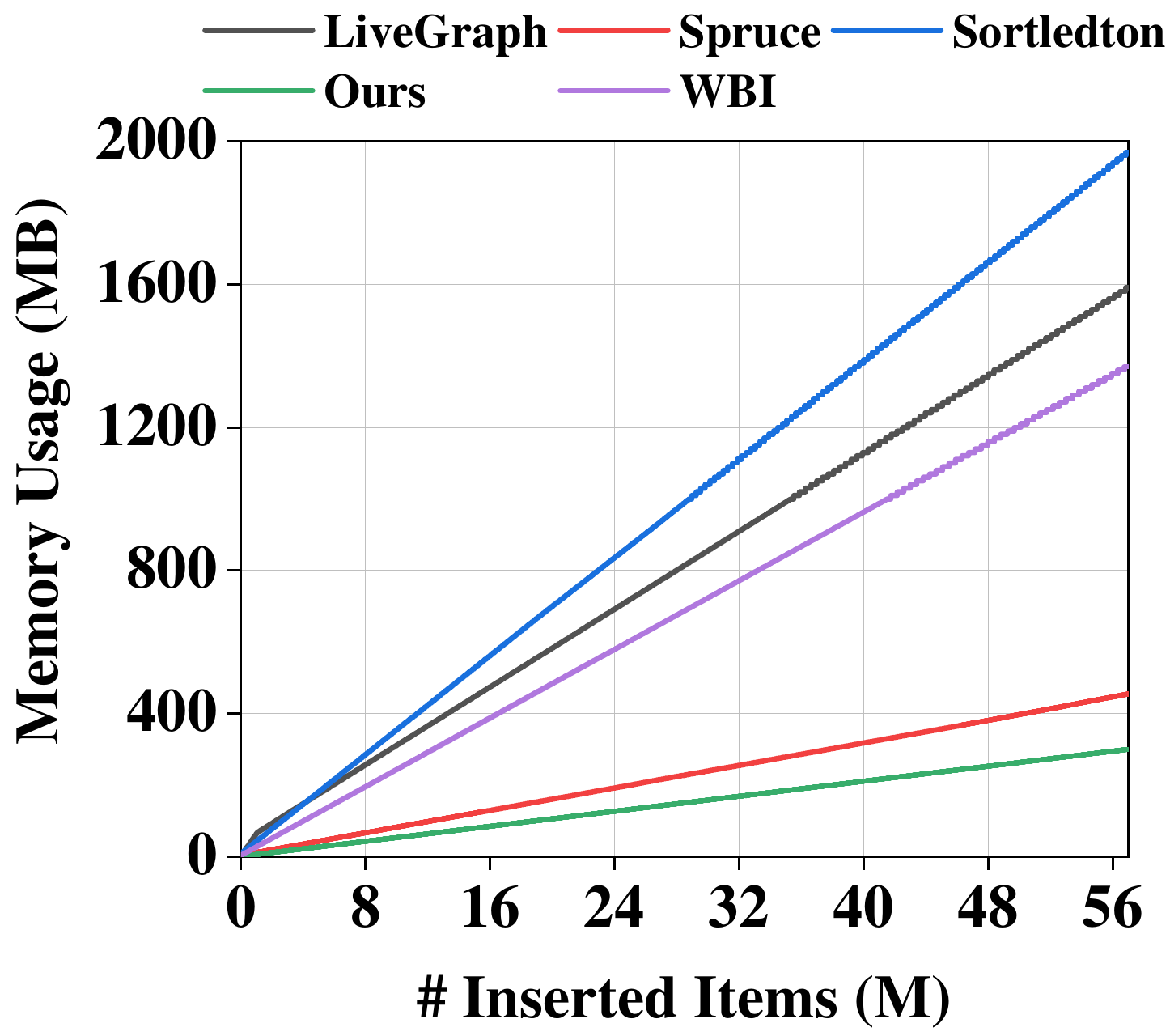}
    \label{mem-de}
    }
    \subfigure[SparseGraph]{
    \includegraphics[width=4.4cm]{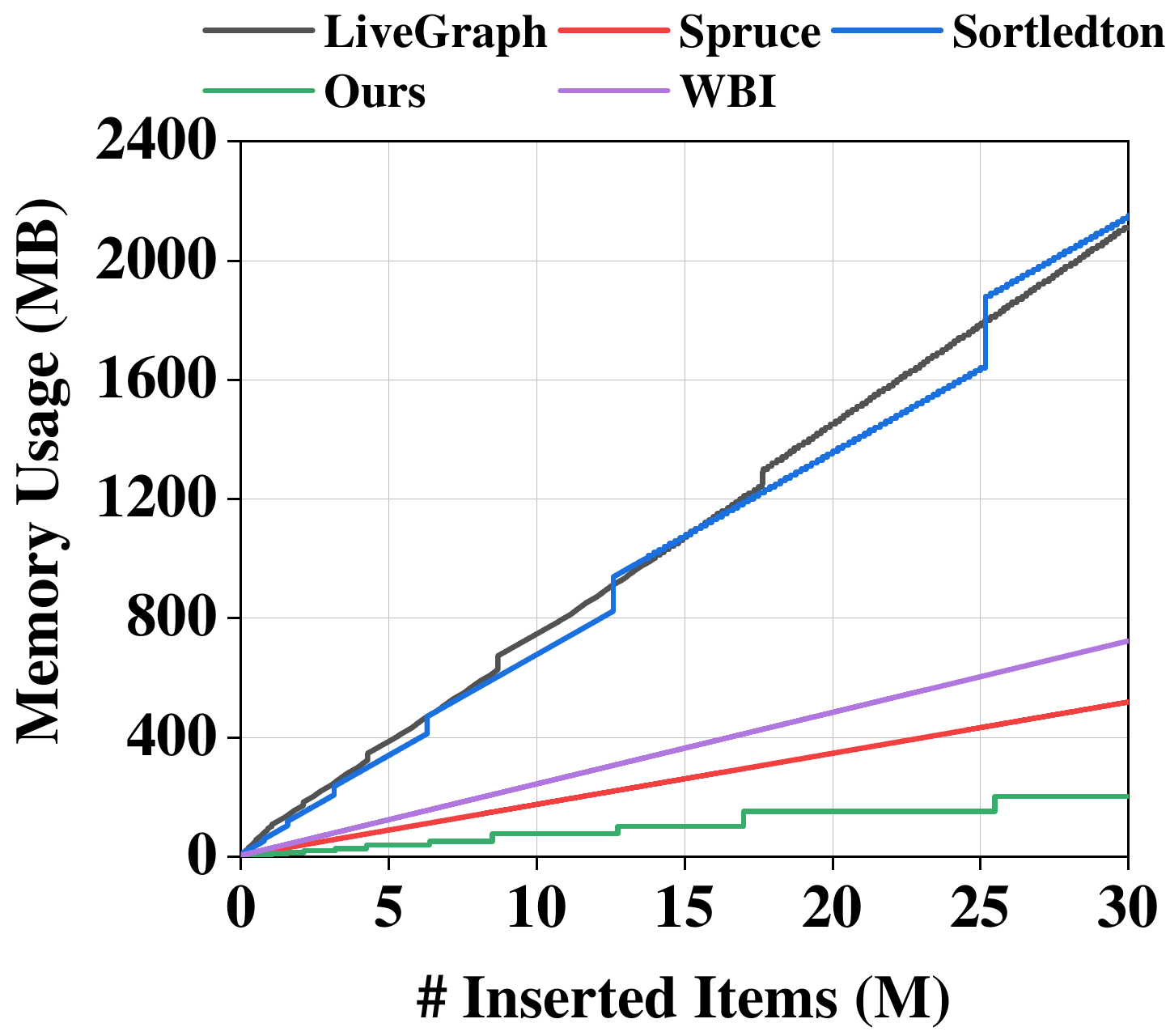}
    \label{mem-sp}
    }
\caption{Memory usage on different datasets.}
\label{mem}
\vspace{-0.1in}
\end{figure*}

\bbb{Memory Usage (Figure \ref{mem-c}-\ref{mem-sp}): }The results show that, on the seven datasets, the memory usage of \alg{} when all item insertions are completed is $5.92\times$, $1.47\times$, $4.89\times$, and $2.34\times$ less than that of LiveGraph, Spruce, Sortledton, and WBI on average, respectively.

\bbb{Analysis: }\alg{} is a customized design based on CHT, so it does not need to store a large number of pointers like the schemes based on adjacency lists, which significantly reduces space overhead.
In addition, since CHT has a high loading rate, \alg{} achieves a high loading rate and minimizes space waste.

\subsection{Experiments on Graph Analytics Tasks}
\label{exp:Analytics}

In this subsection, we evaluate the performance of \alg{} and its competitors in terms of running time on the graph datasets in Table \ref{table:datasets} through the following typical graph analytics tasks: Breadth-First Search (BFS), Single-Source Shortest Paths (SSSP), Triangle Counting (TC), Connected Components (CC), PageRank (PR), Betweenness Centrality (BC), and Local Clustering Coefficient (LCC).
Note that some competitors did not complete the experiments within the given time, so their results are not shown in the provided figures.

\subsubsection{\textbf{Breadth-First Search}}~
\label{task:BFS}

\bbb{Methodology: }We first insert all the edges of the entire dataset. 
Then, we select a specific number of nodes with the largest total degree ($i.e.$, the sum of out-degree and in-degree, the same below), and perform a BFS on these nodes, returning each node and the number of nodes obtained in the order of BFS traversal. 
Finally, we calculate the average time taken for these BFS tasks.

\bbb{Results (Figure \ref{exp:bfs}): }We find that, on the seven datasets (two for WBI), the running time of \alg{} on BFS is $2.34\times$, $0.73\times$, $19.83\times$, and $504.81\times$ faster than that of LiveGraph, Spruce, Sortledton, and WBI on average, respectively.

\begin{figure}
    \centering    \includegraphics[width=0.98\linewidth]{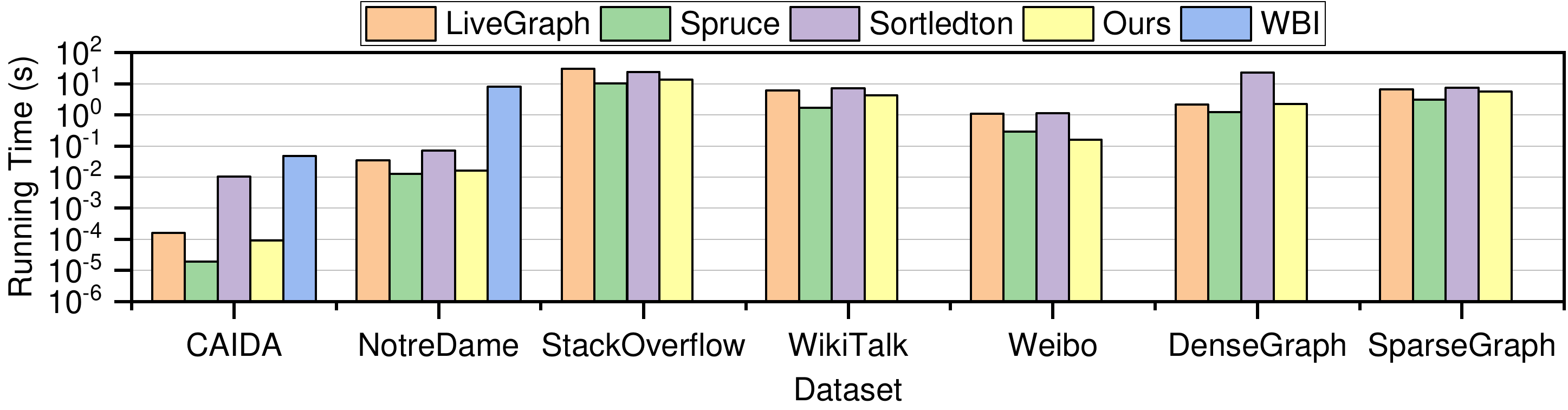}
    \vspace{0.05in}
    \caption{Running time of BFS on different datasets.}
    \label{exp:bfs}
    \vspace{-0.1in}
\end{figure}

\bbb{Analysis: }The most frequently used function of each scheme in this task is its successor query function.
The structure of \alg{} is based on hash tables, so it has good spatial locality. 
Therefore, during the process of querying and traversing the hash tables, the algorithm can achieve excellent spatial locality, which greatly increases the cache hit rate.
Most other adjacency list-based schemes need to store data in different memory addresses and then use pointers to link them.
When querying for successors, the time and space locality is poor and the cache hit rate is low, requiring frequent memory access, which reduces query efficiency.
It is worth noting that WBI not only has the above shortcomings, but also needs to access many other redundant edges when querying successors, so it performs the worst.
The advantage of Spruce may be that its end nodes for finding neighbors can be approximately regarded as being stored more continuously than the other 3.

\subsubsection{\textbf{Single-Source Shortest Paths}}~
\label{task:SSSP}

\bbb{Methodology: }We first insert all the edges of the entire dataset.
Then, we select a specific number of nodes with the largest total degree to extract subgraphs, and select the 10 nodes with the largest total degree among these nodes.
Note that this refers to the 10 nodes with the largest total degree on the original graphs, not on the subgraphs.
After that, we use these 10 nodes as sources to perform Dijkstra algorithm \cite{DijkstraA} 10 times and calculate the average time. 
%

\bbb{Results (Figure \ref{exp:sssp}): }We find that, on the seven datasets (six for Spruce \& WBI), the running time of \alg{} on SSSP is $43.64\times$, $168.45\times$, $1.62\times$, and $278.0\times$ faster than that of LiveGraph, Spruce, Sortledton, and WBI on average, respectively.

\begin{figure}
    \centering    \includegraphics[width=0.98\linewidth]{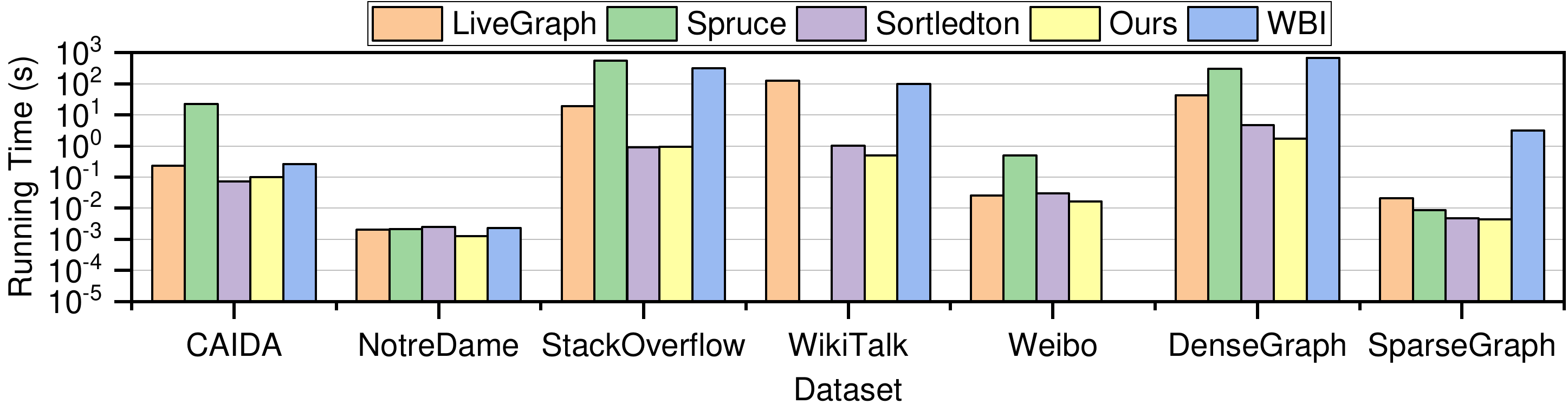}
    \vspace{0.05in}
    \caption{Running time of SSSP on different datasets.}
    \label{exp:sssp}
    \vspace{-0.1in}
\end{figure}

\bbb{Analysis: }The most frequently used function of each scheme in this task is edge query function.
As described in the analysis in $\S$~\ref{exp:TM}, \alg{} has a huge advantage over other adjacency list-based schemes in edge query, so \alg{} achieves the best performance in the SSSP task.

\subsubsection{\textbf{Triangle Counting}}~

\bbb{Methodology: }TC means given a node, return the number of triangles in the graph that contain that node.
First, we perform successor queries to find all 2-hop successors of the node. 
Then, we enumerate all possible edges $\left \langle \textit{2-hop} \ \textit{successor}, \textit{node} \right \rangle$ composed of the node's 2-hop successors and the node itself to perform edge queries. 
Finally, the number of successful queries is the results of TC.

\bbb{Results (Figure \ref{exp:tc}): }We find that, on the seven datasets (five for WBI), the running time of \alg{} on TC is $8.23\times$, $21.33\times$, $1.86\times$, and $3015.11\times$ faster than that of LiveGraph, Spruce, Sortledton, and WBI on average, respectively.

\begin{figure}
    \centering    \includegraphics[width=0.98\linewidth]{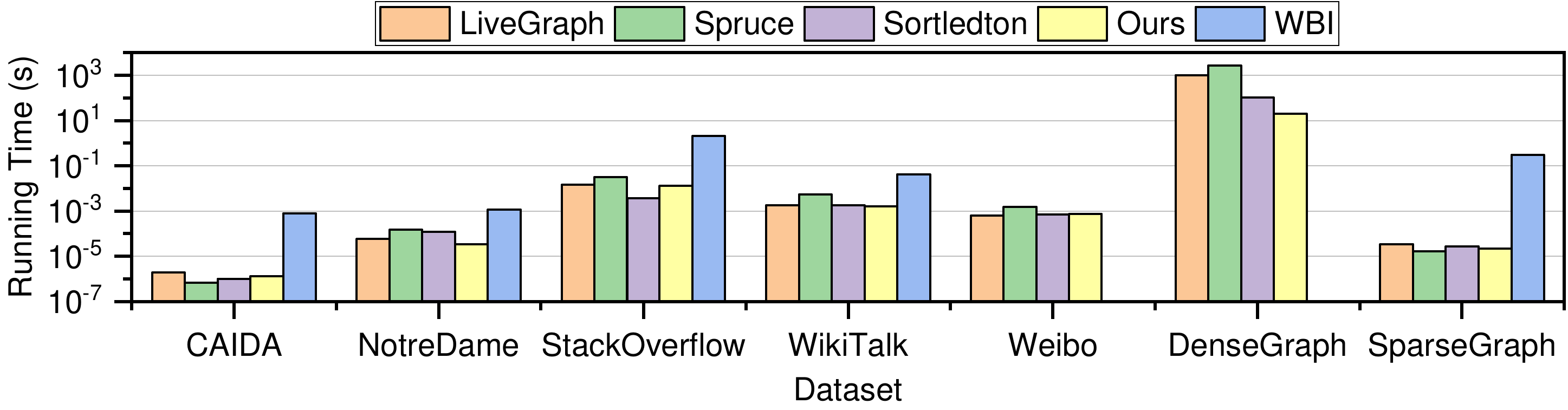}
    \vspace{0.05in}
    \caption{Running time of TC on different datasets.}
    \label{exp:tc}
\end{figure}

\bbb{Analysis: }The edge query and successor query functions are the most frequently used functions in each scheme in this task. 
As analyzed in $\S$~\ref{task:BFS} and $\S$~\ref{task:SSSP}, \alg{} can achieve good performance in these two functions, so it also performs very well in this task.

\subsubsection{\textbf{Connected Components}}~
\label{task:CC}

\bbb{Methodology: }All edges for the entire dataset are inserted.
We first select a specific number of nodes with the largest total degree to extract subgraphs, and then insert the subgraphs into each scheme. \textit{Note that the above steps also apply to the last 3 tasks.}
After that, we run the Tarjan algorithm \cite{tarjan1985efficient} on the subgraphs using each scheme and return the connected components and their number.

\bbb{Results (Figure \ref{exp:cc}): }We find that, on the seven datasets, the running time of \alg{} on CC is $2.11\times$, $1.07\times$, $9.91\times$, and $39.7\times$ faster than that of LiveGraph, Spruce, Sortledton, and WBI on average, respectively.

\begin{figure}
    \centering    \includegraphics[width=0.98\linewidth]{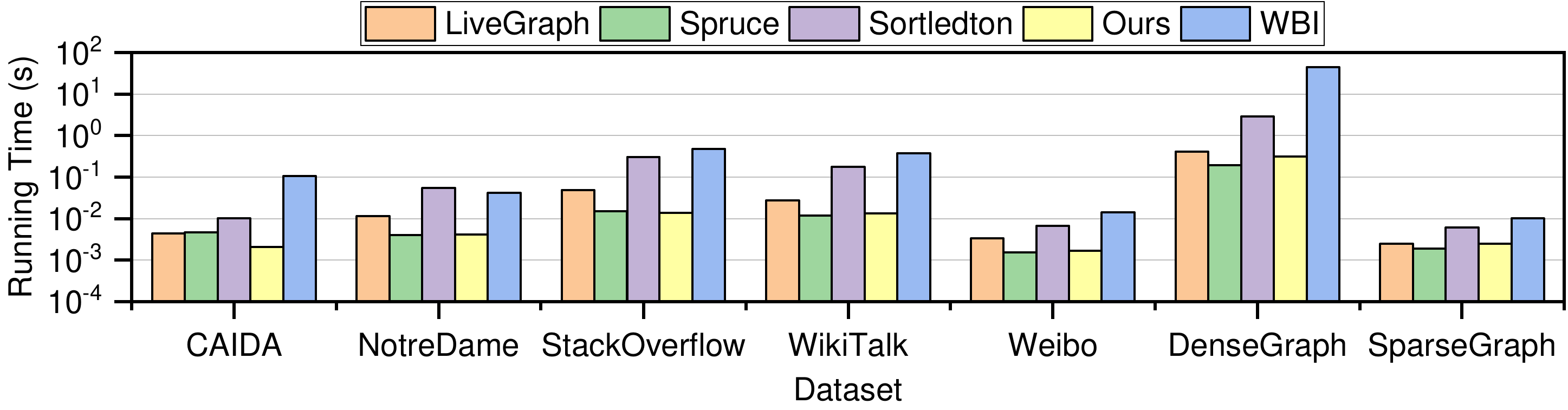}
    \vspace{0.05in}
    \caption{Running time of CC on different datasets.}
    \label{exp:cc}
    \vspace{-0.1in}
\end{figure}

\bbb{Analysis: }The most frequently used function of each scheme in this task is its successor query function.
As analyzed in $\S$~\ref{task:BFS}, \alg{} 
performs well in this function, so it has an advantage over other schemes in this task.

\subsubsection{\textbf{PageRank}}~

\bbb{Methodology: }The initial steps are the same as those in $\S$~\ref{task:CC}.
Then, we use the successor query function of each scheme to assist in constructing the matrix required to solve the PageRank (PR), and iterate 100 times on the matrix to find the PR of each node on the subgraphs.

\bbb{Results (Figure \ref{exp:pr}): }We find that, on the seven datasets, the running time of \alg{} on PR is $2.16\times$, $1.03\times$, $2.62\times$, and $2.87\times$ faster than that of LiveGraph, Spruce, Sortledton, and WBI on average, respectively.

\begin{figure}
    \centering    \includegraphics[width=0.98\linewidth]{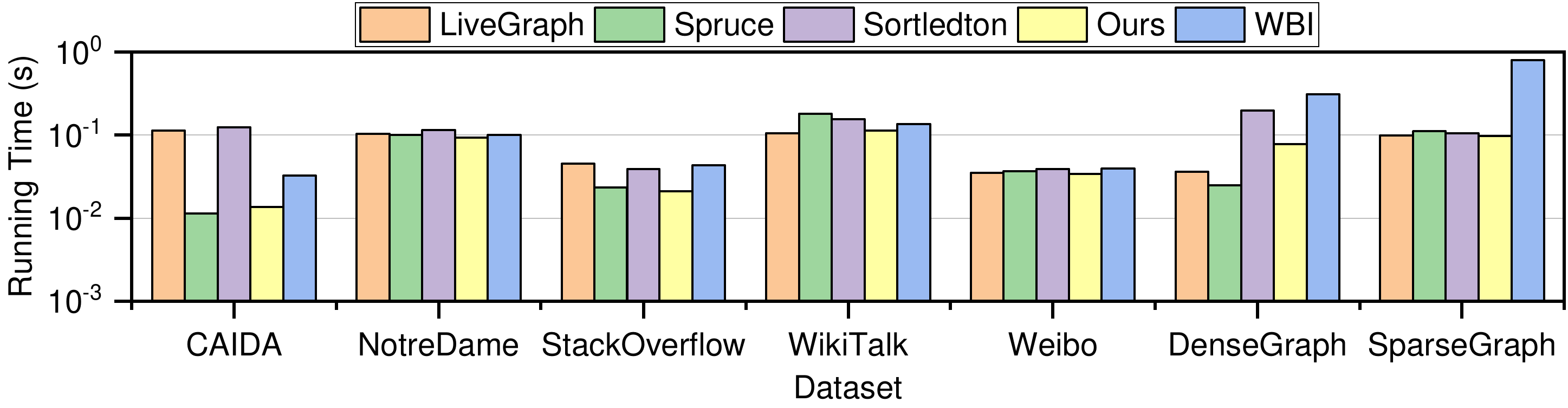}
    \vspace{0.05in}
    \caption{Running time of PR on different datasets.}
    \label{exp:pr}
\end{figure}

\bbb{Analysis: }Each scheme in this task frequently uses the successor query function to construct the matrix required to calculate PR.
As analyzed in $\S$~\ref{task:BFS}, \alg{} 
performs well in successor query, so it also shows advantages over other schemes in this task.

\subsubsection{\textbf{Betweenness Centrality}}~

\bbb{Methodology: }The initial steps are the same as those in $\S$~\ref{task:CC}.
Then, we run the Brandes algorithm \cite{brandes2001faster} on the subgraphs using each scheme.

\bbb{Results (Figure \ref{exp:bc}): }We find that, on the seven datasets, the running time of \alg{} on BC is $3.15\times$, $16.17\times$, $7.33\times$, and $5.23\times$ faster than that of LiveGraph, Spruce, Sortledton, and WBI on average, respectively.

\begin{figure}
    \centering    \includegraphics[width=0.98\linewidth]{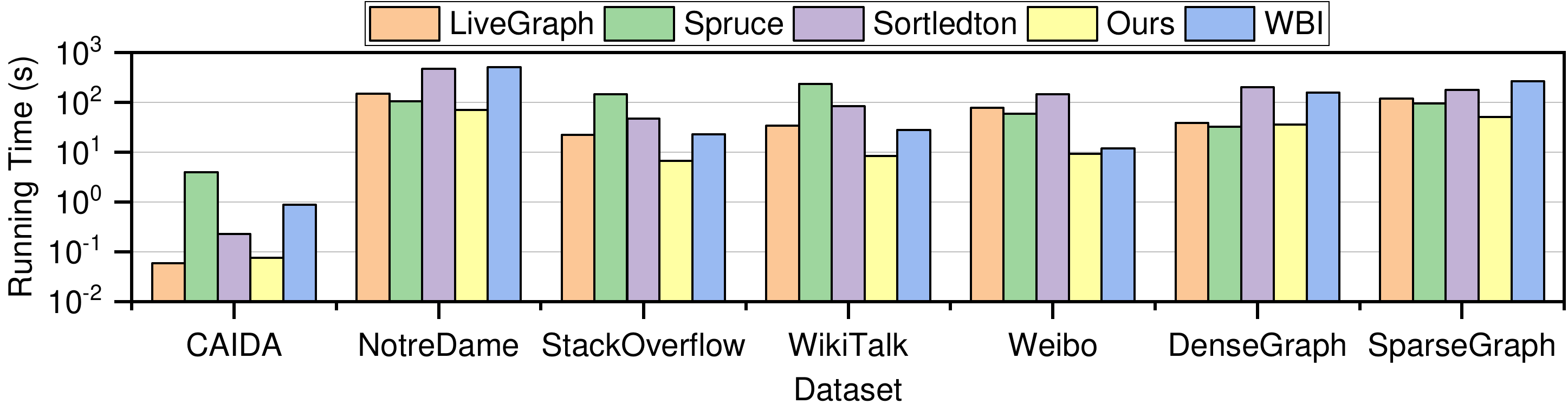}
    \vspace{0.05in}
    \caption{Running time of BC on different datasets.}
    \label{exp:bc}
    \vspace{-0.1in}
\end{figure}

\bbb{Analysis: }Similar to the analysis in $\S$~\ref{task:CC}.

\subsubsection{\textbf{Local Clustering Coefficient}}~

\bbb{Methodology: }The initial steps are the same as those in $\S$~\ref{task:CC}. 
Then, we pre-compute all neighbors of each node and run the Local Clustering Coefficient (LCC) algorithm, which is implemented in \cite{iosup2020ldbc}.

\bbb{Results (Figure \ref{exp:lcc}): }We find that, on the seven datasets (six
for WBI), the running time of \alg{} on LCC is $2.06\times$, $5.80\times$, $3.94\times$, and $4.21\times$ faster than that of LiveGraph, Spruce, Sortledton, and WBI on average, respectively.

\begin{figure}
    \centering    \includegraphics[width=0.98\linewidth]{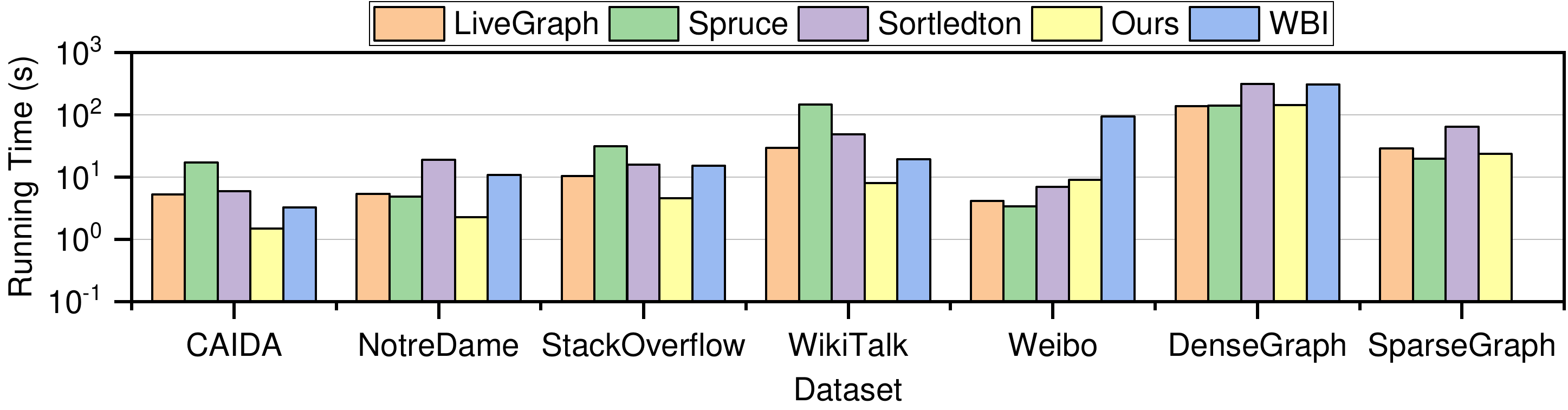}
    \vspace{0.05in}
    \caption{Running time of LCC on different datasets.}
    \label{exp:lcc}
    \vspace{-0.1in}
\end{figure}

\bbb{Analysis: }Similar to the analysis in $\S$~\ref{task:CC}.

\subsection{Redis Implementation}
\label{Redis-I}

\bbb{Methodology: }We utilize Redis Module \cite{redis-m} to register our \alg{} module, adding the data structure of \alg{} to the original Redis. 
This allows Redis to store graphs in addition to supporting the five original data structures.
Specifically, we implement Redis Module API (including \texttt{save\_rdb}, \texttt{load\_rdb}, \texttt{aof\_rewrite} and other interfaces) on top of \alg{} to support Redis persistence operations.
Meanwhile, we also provide extended commands for \alg{} (including \texttt{insert}, \texttt{del}, \texttt{query} and \texttt{getneighbors}).
We compile our interface implementation into a dynamic link library, and simply import \alg{} library with \texttt{--loadmodule} when Redis starts.

\bbb{Setup: }We conduct experiments on the CAIDA and StackOverflow datasets to test the throughput performance of \alg{} on Redis.

\bbb{Results \& Analysis (Figure \ref{exp:redis}): }The results show that the insertion, query, and deletion throughput of \alg{} on Redis is around $0.04\sim0.05$ Mops. 
There is some performance loss compared to those of \alg{} on CPU, which is mainly caused by the Redis system.
We also run Redis benchmark on the server, and the peak throughput of native Redis is only around $0.16$ Mops. 
Considering that \alg{} itself inevitably has overhead, its performance on Redis is completely acceptable.

\begin{figure}
    \centering    \includegraphics[width=0.95\linewidth]{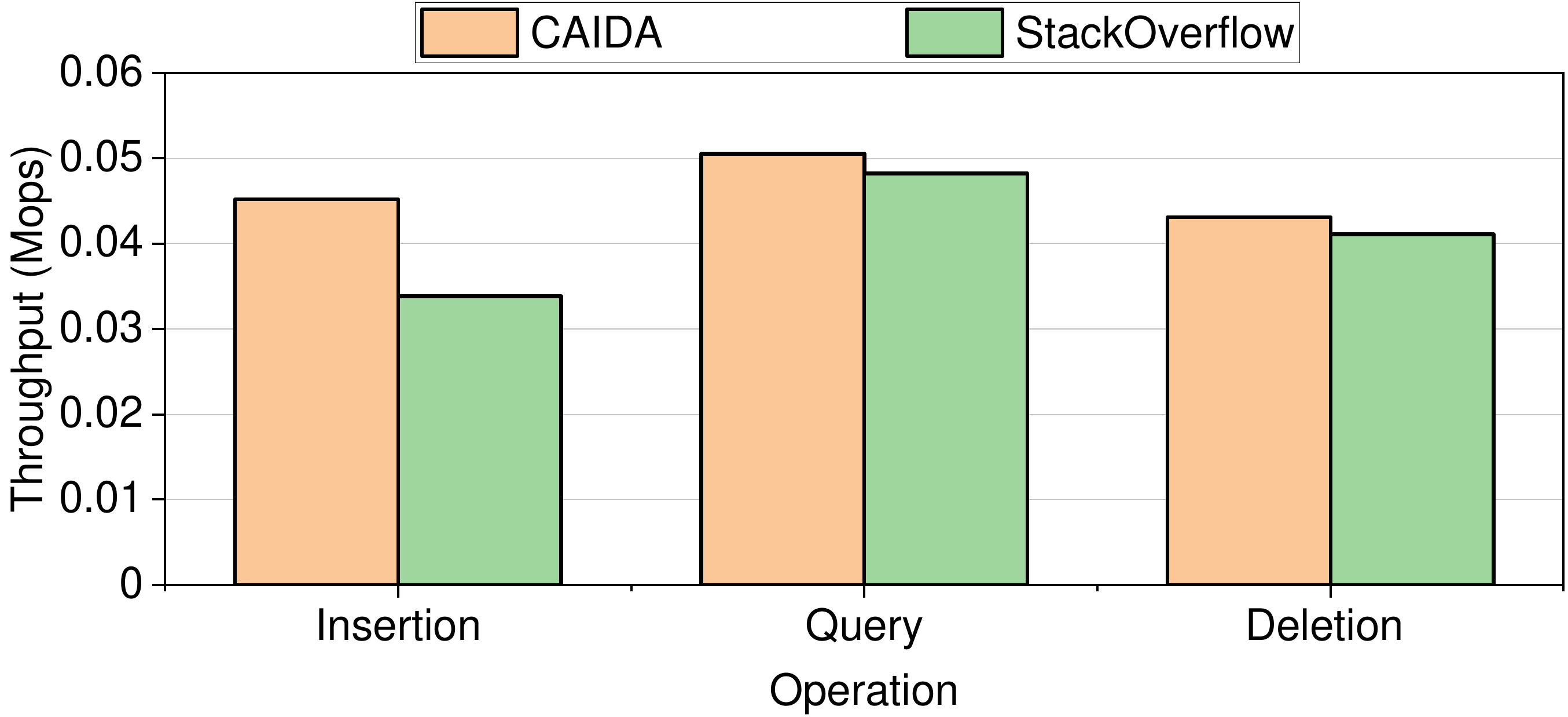}
    \vspace{0.05in}
    \caption{The throughput of \alg{} on Redis.}
    \label{exp:redis}
    \vspace{-0.15in}
\end{figure}

\subsection{Neo4j Implementation}
\label{Neo4j-I}

\bbb{Methodology: }If we want to store an edge $\left \langle u, v \right \rangle$ in Neo4j \cite{Neo4j}, nodes $u$ and $v$ each maintain a adjacency list that stores all the edges associated with that node, so the information about $\left \langle u, v \right \rangle$ is stored in the adjacency lists of both $u$ and $v$.
If we want to query an edge $\left \langle u, v \right \rangle$, we have to find the adjacency list of $u$, and then traverse the list and compare the edges one by one until we find $\left \langle u, v \right \rangle$. 
Obviously, it is inefficient. 
Once the degree of $u$ is high, querying edge $\left \langle u, v \right \rangle$ has to access a large number of unrelated other edges, causing additional redundancy overhead.
To speed up edge queries, we introduce the \alg{} query interface to obtain an edge without traversing the adjacency list of the node.
Since multiple edges (with the same $u$ and $v$ but not the same edge) are allowed in Neo4j, the data structure of \alg{} needs some adjustments for this.
Compared to the weighted version on the CPU, we change the weight field in each S-CHT small slot from a counter that records the number of edges to a linked list consisting of a series of edges with the same nodes $u$ and $v$. 
The linked list is as long as the number of edges corresponding to $\left \langle u, v \right \rangle$ in that small slot.
In this way, the query interface of \alg{} returns an iterator, through which the linked list can be traversed to obtain all the edges between $\left \langle u, v \right \rangle$.

\bbb{Setup: }We deploy the above \alg{} on top of the original Neo4j and evaluate the performance by running time, as shown below.
1) For the insertion experiments, we insert the first 1M edges from the CAIDA dataset into Neo4j.
Whenever an edge is inserted into Neo4j, we also need to insert that edge into the \alg{} structure, which requires a little extra time overhead.
2) For the query experiments, we first deduplicate the 1M edges, and then query the \alg{} structure. 
For comparison, related operations on pure Neo4j do not introduce \alg{}.


\bbb{Results \& Analysis (Figure \ref{exp:Neo4j}): }\textbf{1)} Thanks to the good performance of our data structure, our insertions are very fast and require only a little extra overhead, so our insertion time is almost the same as pure Neo4j.
\textbf{2)} Since the time cost of \alg{}'s query to obtain the iterator of the linked list is $O(1)$, the query speed of the version with \alg{} is very fast.
It can be predicted that the query speed of Neo4j with \alg{} will be improved more significantly as the data scale increases.
In pure Neo4j, many irrelevant/redundant edges must be traversed, and this additional overhead time is not $O(1)$, which ultimately causes the query time of pure Neo4j to be much slower than that with the assistance of \alg{}.

\begin{figure}
    \centering    \includegraphics[width=0.88\linewidth]{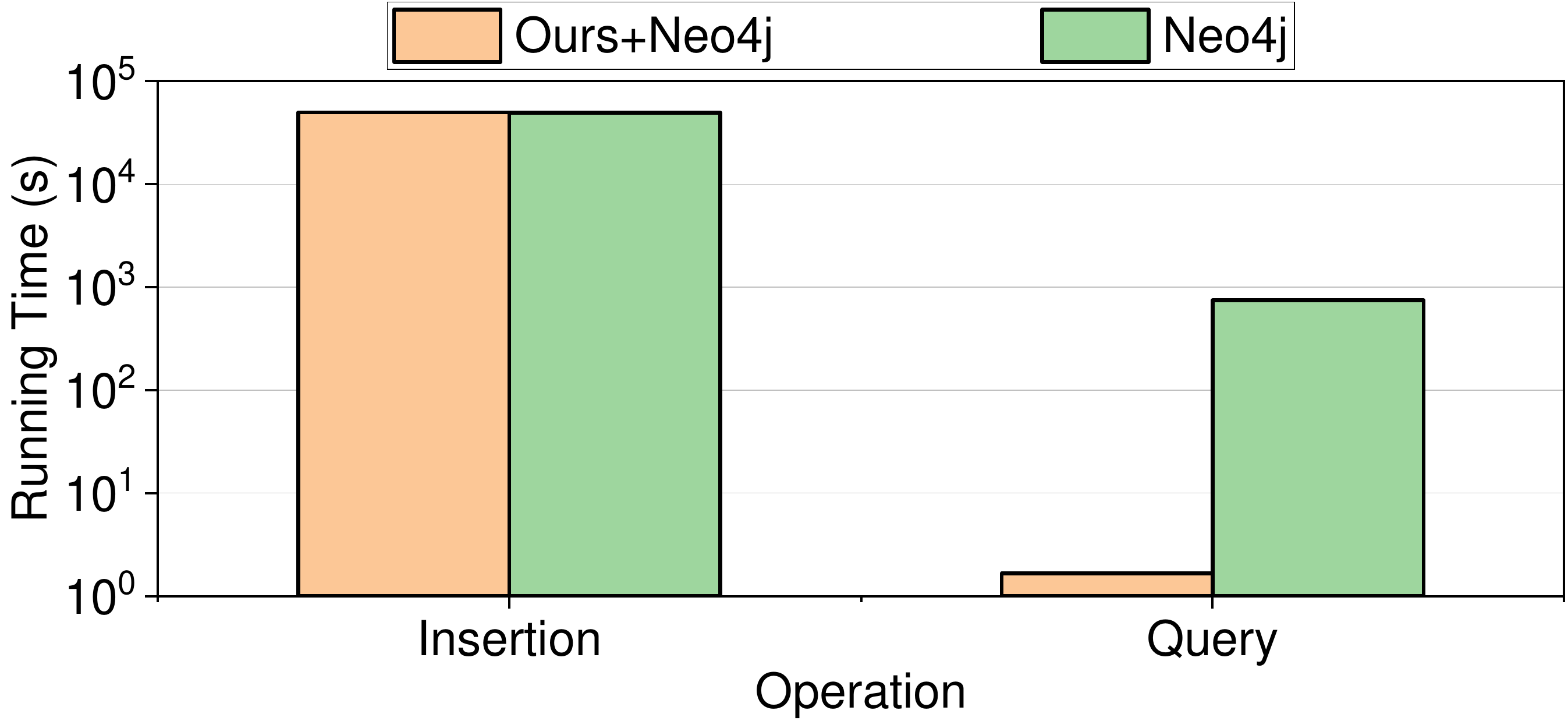}
    \vspace{0.05in}
    \caption{Running time of Neo4j with and without \alg{}.}
    \label{exp:Neo4j}
    \vspace{-0.15in}
\end{figure}

\section{Conclusion}

In this paper, we propose a novel data structure designed for large-scale dynamic graphs, called \alg{}, which includes two key techniques, \textsc{Transformation} and \textsc{Denylist}. 
Thanks to them, \alg{} can be flexibly resized based on actual operations to achieve memory efficiency while keeping few memory accesses to achieve fast processing speed without any prior knowledge of the upcoming graphs.
Our mathematical analysis theoretically proves that \alg{} is time and space efficient.
Our experimental results show that \alg{} significantly outperforms 4 state-of-the-art schemes.
In particular, compared with Spruce, \alg{} achieves $32.66\times$ faster insertion throughput while reducing memory space by about $32\%$, and $168.45\times$ less running time on the SSSP task.
Finally, we integrate \alg{} in Redis and Neo4j databases to extend its practicality.

\section*{Acknowledgment}
    
We would like to thank the anonymous reviewers for their insightful comments.
This work was supported by the National Natural Science Foundation of China (NSFC) (No. 62402012), and in part by the China Postdoctoral Science Foundation (No. 2023TQ0010, GZC20230055, 2024M750102).

{
\bibliographystyle{IEEEtran_ref.bst}
\bibliography{references.bib}
}

\end{document}